\newtheorem{theorem}{Theorem}
\newtheorem{corollary}[theorem]{Corollary}
\newtheorem{lemma}[theorem]{Lemma}
\newtheorem{fact}{Fact}
\theoremstyle{definition}
\newtheorem{definition}{Definition}
\theoremstyle{remark}
\newtheorem{remark}{Remark}
\DeclareMathOperator{\id}{id}
\DeclareMathOperator{\mc}{mc}
\DeclareMathOperator{\sol}{sol}
\newclass{\bs}{bs}
\renewclass{\C}{C}
\newclass{\D}{D}
\newclass{\local}{LOCAL}
\newclass{\lca}{CentLOCAL}
\newclass{\partree}{ParallelDecTree}
\newclass{\nice}{NICE}
\DeclareMathOperator{\F}{\mathcal{F}}
\newcommand{\A}{\mathcal{A}}
\newcommand{\T}{\mathcal{T}}
\newcommand{\G}{\mathcal{G}}
\renewcommand{\H}{\mathcal{H}}
\newcommand{\mis}{\text{\textsc{mis}}}
\newcommand{\mm}{\text{\textsc{mm}}}
\newcommand{\dcolor}{\text{$(\Delta+1)$-\textsc{colour}}}
\newcommand{\mcm}{\text{\textsc{mcm}}}
\newcommand{\eps}{\varepsilon}
\newcommand{\mwm}{\text{\textsc{mwm}}}
\newcommand{\simu}{\text{\textsc{sim}}}
\newcommand{\simur}{\sim_{\text u.a.r.}}
\newcommand{\oracleG}{$\mathcal{O}^G$}
\newcommand{\oracleGm}{\mathcal{O}^G}
\newcommand{\simuA}{$\textsf{sim}^A$}
\newenvironment{myabstract}
               {\list{}{\listparindent 1.5em%
                        \itemindent    \listparindent
                        \leftmargin    0pt
                        \rightmargin   0pt
                        \parsep        0pt}%
                \item\relax}
               {\endlist}
\newenvironment{mycover}
               {\list{}{\listparindent 0pt
                        \itemindent    \listparindent
                        \leftmargin    0pt
                        \rightmargin   0pt
                        \parsep        0pt}%
                \raggedright
                \item\relax}
               {\endlist}
\begin{document}

\vspace*{2ex}
\begin{mycover}
{\huge\bfseries Non-Local Probes Do Not Help\\with Graph Problems\par}
\bigskip
\bigskip

\textbf{Mika G\"o\"os}

\nolinkurl{mika.goos@mail.utoronto.ca}
\medskip

{\small Department of Computer Science, University of Toronto, Canada \par}
\bigskip

\textbf{Juho Hirvonen}

\nolinkurl{juho.hirvonen@aalto.fi}
\medskip

{\small Helsinki Institute for Information Technology HIIT, \\ Department of Computer Science, Aalto University, Finland\par}
\bigskip

\textbf{Reut Levi}

\nolinkurl{rlevi@mpi-inf.mpg.de}
\medskip

{\small MPI for informatics, \\ Saarbr\"{u}cken 66123, Germany \par}
\bigskip

\textbf{Moti Medina}

\nolinkurl{mmedina@mpi-inf.mpg.de}
\medskip

{\small MPI for informatics, \\ Saarbr\"{u}cken 66123, Germany \par}
\bigskip

\textbf{Jukka Suomela}

\nolinkurl{jukka.suomela@aalto.fi}
\medskip

{\small Helsinki Institute for Information Technology HIIT, \\ Department of Computer Science, Aalto University, Finland\par}

\end{mycover}
\bigskip
\begin{myabstract}
\noindent\textbf{Abstract.}
This work bridges the gap between distributed and centralised models of computing in the context of sublinear-time graph algorithms. A priori, typical centralised models of computing (e.g., parallel decision trees or centralised local algorithms) seem to be much more powerful than distributed message-passing algorithms: centralised algorithms can directly probe any part of the input, while in distributed algorithms nodes can only communicate with their immediate neighbours. We show that for a large class of graph problems, this extra freedom does not help centralised algorithms at all: for example, efficient stateless deterministic centralised local algorithms can be simulated with efficient distributed message-passing algorithms. In particular, this enables us to transfer existing lower bound results from distributed algorithms to centralised local algorithms.
\end{myabstract}

\thispagestyle{empty}
\setcounter{page}{0}
\newpage


\section{Introduction}

A lot of recent work on efficient graph algorithms for massive graphs can be broadly classified in one of the following categories:
\begin{enumerate}
    \item \emph{Probe-query models} \cite{rubinfeld11fast,even14deterministic,even14best,even15distributed,mansour13local,levi14local,alon12space-efficient,alon10constant}: Here typical applications are related to \emph{large-scale network analysis}: we have a huge storage system in which the input graph is stored, and a computer that can access the storage system. The user of the computer can make \emph{queries} related to the properties of the graph.

    Conceptually, we have two separate entities: the input graph and a computer. Initially, the computer is unaware of the structure of the graph, but it can \emph{probe} it to learn more about the structure of the graph. Typically, the goal is to answer queries after a \emph{sublinear number of probes}.
    \item \emph{Message-passing models} \cite{linial92locality,naor95what,lynch96book,peleg00distributed,kuhn04what,suomela13survey,lenzen2008leveraging,czygrinow2008fast}: In message-passing models, typical applications are related to \emph{controlling large computer networks}: we have a computer network (say, the Internet) that consists of a large number of network devices, and the devices need to collaborate to solve a graph problem related to the structure of the network so that each node knows \emph{its own part of the solution} when the algorithm stops.

    Conceptually, each node of the input graph is a computational entity. Initially, the nodes are only aware of their own identity and the connections to their immediate neighbours, but the nodes can \emph{exchange messages} with their neighbours in order to learn more about the structure of the graph. Typically, the goal is to solve graph problems in a \emph{sublinear number of communication rounds}.
\end{enumerate}

\subsection{Example: Vertex Colouring}

Using the task of finding a proper vertex colouring as an example, the external behaviour of the algorithms can be described as follows:
\begin{enumerate}
    \item \emph{Probe-query models:} The user can make queries of the form ``what is the colour of node $v$?'' The answers have to be consistent with some fixed feasible solution: for example, if we query the same node twice, the answer has to be the same, and if we query two adjacent nodes, their colours have to be different.
    \item \emph{Message-passing models:} The local output of node $v$ is the colour of node $v$. The local outputs constitute some feasible solution: the local outputs of two adjacent nodes have to be different.
\end{enumerate}

\subsection{Message-Passing Models and Locality}

From our perspective, the key difference between probe-query models and message-passing models is that the structure of the graph constrains the behaviour of message-passing algorithms, but not probe-query algorithms:
\begin{enumerate}
    \item \emph{Probe-query models:} Given any query $v$, the algorithm is free to probe any parts of the input graph. In particular, it does not need to probe node $v$ or its immediate neighbours.
    \item \emph{Message-passing models:} Nodes can only exchange messages with their immediate neighbours. For example, in $1$ communication round, a node can only learn information related to its immediate neighbours. More generally, after $t$ communication rounds, in any message-passing algorithm, each node can only be aware of information that was initially available in its radius-$t$ neighbourhood.
\end{enumerate}
In essence, efficient message-passing algorithms have a high \emph{locality}: if the running time of a message-passing algorithm is $t$, then the local output of a node $v$ can only depend on the information that was available within its radius-$t$ neighbourhood in the input graph.

\subsection{Trivial: from Message-Passing to Probe-Query}

One consequence of locality is that we can fairly easily \emph{simulate efficient message-passing algorithm in probe-query models} (at least for deterministic algorithms).

If we have a message-passing algorithm $A$ with a running time of $t$ for, say, graph colouring, we can turn it easily into a probe-query algorithm $A'$ for the same problem: to answer a query $v$, algorithm $A'$ simply gathers the radius-$t$ neighbourhood of node $v$ and simulates the behaviour of $A$ in this neighbourhood.

In particular, if $t$ is a constant, and the maximum degree of the input graph is bounded by a constant, the probe complexity of $A'$ is also bounded by a constant.

\subsection{Impossible: from Probe-Query to Message-Passing}

At first, it would seem that the converse cannot hold: probe-query algorithms can freely probe remote parts of the network, and therefore they cannot be simulated with efficient message-passing algorithms.

Indeed, it is easy to construct \emph{artificial} graph problems that are trivial to solve in probe-query models in constant time and that take linear time to solve in the message-passing models. Consensus-like problems provide a simple example.

In the \emph{binary consensus} problem, the nodes of the network are labeled with inputs $0$ and $1$, all nodes have to produce the same output, and the common output has to be equal to the input of at least one node. In a probe-query algorithm, we can simply always follow the local input of node number $1$: regardless of the query $v$, we will probe node $1$, check what was its local input, and answer accordingly. It is straightforward to see that any message-passing algorithm requires linear time (consider three cases: a path with inputs $00\ldots0$, a path with with inputs $11\ldots1$, and a path with inputs $00\ldots011\ldots1$).

\subsection{Our Contribution: Remote Probes Are of Little Use}

While the consensus problem demonstrates that the possibility to probe remote parts of the network makes probe-query models strictly stronger than message-passing models, there seem to be few \emph{natural} graph problems in which remote probes would help.

In this work, we formalise this intuition. We show that for a large class of graph problems, remote probes do not give probe-query algorithms any advantage over message-passing algorithms.

Among others, we will show that for a large family of problems which also includes the so-called \emph{locally checkable problems} (LCL) \cite{naor95what}, probe-query algorithms in Rubinfeld et al.'s model~\cite{rubinfeld11fast} can be efficiently simulated with message-passing algorithms in Linial's model~\cite{linial92locality}.
\subsection{Corollary: Probe-Query Lower Bounds}

While lower-bound results in probe-query models are scarce, there is a lot of prior work on lower-bound results in message-passing models. Indeed, the very concept of locality makes it relatively easy to derive lower-bound results for message-passing models: a problem cannot be solved in time $t$ if there is at least one graph in which the output of some node necessarily depends on information that is not available in its radius-$t$ neighbourhood.

Our simulation result makes it now possible to take existing lower bounds for message-passing models and use them to derive analogous lower bounds for probe-query models.

\subsection{Related Work}

While both message-passing models and probe-query models have been studied extensively, it seems that there is little prior work on their connection. The closest prior work that we are aware of is the 1990 paper by Fich and Ramachandran~\cite{fich90linkedlists}. They show a result similar to our main theorem for one graph problem---graph colouring in cycles---with problem-specific arguments. While our techniques are different, our main result can be seen as a generalisation of their work from a single graph problem to a broad family of graph problems.

On the side of probe-query algorithms, our main focus is on the $\lca$ model~\cite{rubinfeld11fast}. Algorithms in the $\lca$ model are abundant, and include algorithms for various graph problems such as \emph{maximal matching}, \emph{approximated maximum matching}, \emph{approximated maximum weighted matching}, \emph{graph colouring}, \emph{maximal independent set}, \emph{approximated maximum independent set}, \emph{approximated minimum dominating set}, and \emph{spanning graphs}~\cite{reingold14new,mansour12converting,even14deterministic, even14best, even15distributed, mansour13local, levi14local,alon12space-efficient,alon10constant,feige15learning, levi15non-constant,rubinfeld11fast,LeviMRRS15,CampagnaGR13}. On the other hand lower bounds in this model are almost nonexistent. In fact, the only lower bound shown directly in the centralised local model is for the spanning graph problem in which the $\lca$ algorithm computes a ``tree-like'' subgraph of a given bounded degree graph~\cite{levi14local,LeviMRRS15}.

\subsection{Overview}

We start this work by introducing the models of computing that we study in Section~\ref{sec:prel}. The key models are the $\local$ model, which is a message-passing model introduced by Linial~\cite{linial92locality}, and the $\lca$ model (centralised local model, a.k.a.\ local computation algorithms), which is a probe-query model introduced by Rubinfeld et al.~\cite{rubinfeld11fast}. In Section~\ref{sec:cont} we briefly list our main contributions. In Section~\ref{sec:simulation} we show our main result, a simulation between the \local{} model and parallel decision trees.
In Section~\ref{sec:lcaparapp} we show that query-order-oblivious \lca\ algorithms are equivalent to  stateless algorithms, and provide a separating example  between stateless algorithms and state-full algorithms with logarithmic space. We also show an application of the main result and provide several new lower bounds in the \lca\ model.
In Section~\ref{sec:constlca} we give an explicit simulation in which a \lca\ algorithm that is allowed to probe anywhere in the graph is simulated by a \lca\ algorithm such that the subgraph induced on its probes is connected. Finally, in Section~\ref{sec:opt-lbs} we give additional lower bounds for the probe complexity of optimisation problems.

\section{Preliminaries} \label{sec:prel}

In this section we describe the various models we discuss in this paper. We focus on {\em problems over labeled graphs} defined as follows. Let $\mathcal{P}$ denote a computational problem
over labeled graphs.  A solution for problem $\mathcal{P}$ over a labeled graph $G$ is a function of which the
domain and range  depend on $\mathcal{P}$ and $G$.  For example, in the maximal
independent set problem, a solution is an indicator function $I\colon V\rightarrow \{0,1\}$ of a
maximal independent set in $G$.
Let $\sol(G,\mathcal{P})$ denote the set of solutions of problem $\mathcal{P}$ over the labeled graph
$G$.

\subsection{\boldmath\local{}: Message-Passing Algorithms}

We use the standard \local{} model~\cite{linial92locality,peleg00distributed} as our starting point. The nodes communicate in synchronous rounds, exchanging messages with all neighbours and doing local computation. Since we are comparing the power of non-local probes, we assume that the size of the graph is known to the nodes, as it usually is in the probe models.

The \local{} model can be defined by the fact that in $t$ communication rounds a node can learn exactly its radius-$t$ neighbourhood. A distributed algorithm with running time $t$ is then a function from the set of possible local neighbourhoods to the set of outputs; see Figure~\ref{fig:models}a.

\begin{figure}
    \centering
    \includegraphics[page=1]{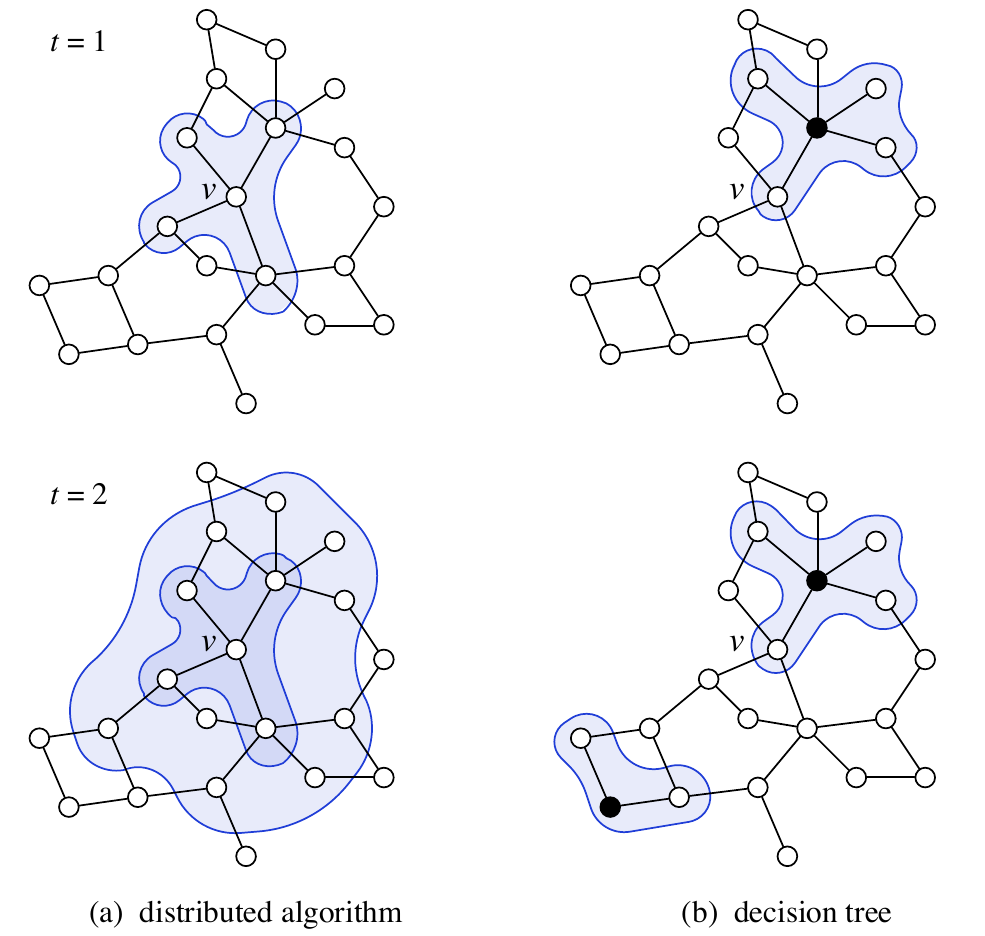}
    \caption{(a)~In a distributed algorithm that runs in time $t$, each node can learn everything up to distance $t$. (b)~In a decision tree algorithm with probe complexity $t$, each node can make up to $t$ probes (indicated with black nodes). After each probe $v$, the decision tree learns the neighbourhood $N(v)$.}\label{fig:models}
\end{figure}

\subsection{\boldmath\partree{}: Parallel Decision Trees}\label{sec:pardec}

We use parallel decision trees, a simple probe model, to connect the \local{} model and the \lca{} model.

Fix an $n$. Our unknown input graph $G = (V,E)$ will have $n$ nodes, labelled with $V = [n] := \{1,2,\dotsc,n\}$. We also refer to the labels as identifiers.
Let $N(v) \subseteq V$ denote the set of neighbours of node $v$. For our purposes, a \emph{decision tree} $\T$ of depth $t$ is an algorithm that can make at most $t$ \emph{probes}: for each probe $v \in V$, tree $\T$ will learn $N(v)$. In words, a decision tree can point at any node of $G$ and ask for a list of its neighbours; see Figure~\ref{fig:models}b. After $t$ probes, tree $\T$ produces an output; we will write $\T(G)$ for the output of decision tree $\T$ when we apply it to graph $G$.

We study graph problems in which the goal is to find a feasible labelling $f\colon V \to L$ of the nodes of our input graph $G$. We say that a graph problem $\mathcal{P}$ can be solved with \emph{$t(n)$ probes in parallel} if there are decision trees $\T_1,\T_2,\dotsc,\T_n$ of depth $t(n)$ such that $f\colon v \mapsto \T_v(G)$ is a feasible solution to problem $\mathcal{P}$ for any input graph $G$. That is, we have $n$ parallel decision trees such that tree $\T_v$ computes the output for node~$v$ and each tree makes at most $t(n)$ probes.

The \emph{probe complexity} of problem $\mathcal{P}$ is the smallest $t(n)$ such that $\mathcal{P}$ can be solved with $t(n)$ probes in parallel.

\subsection{\boldmath\lca{}: Centralised Local Algorithms}

We adopt the definition of the centralised local model (\lca) as formalised in~\cite{rubinfeld11fast}.
In this paper we focus on graph computation problems in which the algorithm is given a probe access to the input graph $G$ through a graph oracle \oracleG.
A \emph{probe} to  \oracleG\ is an identifier of a vertex $v \in V$, in turn, \oracleG\ returns a list of the identifiers of the neighbors of $v$. We assume that for a graph of size $|V| = n$, the set of identifiers is $[n]$.

A \emph{\lca\ algorithm} $\mathcal{A}$ for a computation problem $\mathcal{P}$ is a (possibly randomised) algorithm $\mathcal{A}$ with the following properties.
Algorithm $\mathcal{A}$ is given probe access to the adjacency list oracle \oracleG\ for the input graph $G$, tape of random bits, and local read-write computation memory.
When given an input (query) $x$, algorithm $\mathcal{A}$ returns the answer for $x$.
This answer depends only on $x$, $G$, and the random bits.
The answers given by $\mathcal{A}$ to all possible queries must be consistent; namely, all answers constitute some valid solution to $\mathcal{P}$.

The \emph{probe complexity} is the maximum number of probes that $\mathcal{A}$ makes to $G$ in order to compute an answer for any query.
The \emph{seed length} is the length of the random tape.
The \emph{space complexity} is the local computation memory used by $\mathcal{A}$ over all queries (not including the seed length).
The \emph{success probability} is the probability that $\mathcal{A}$ consistently answers all queries.

We say that a \lca\ algorithm is \emph{stateless} if its space complexity is zero.
We say that a \lca\ algorithm is \emph{state-full} if it is not stateless. With this terminology, we can characterise parallel decision trees as follows:

\begin{fact}\label{fact:lcapar}
  The deterministic stateless \lca\ model is identical to the deterministic parallel decision tree model.
\end{fact}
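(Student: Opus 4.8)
The plan is to prove the two inclusions separately, giving for each direction a direct, probe-preserving translation of one model into the other; since these translations are mutually inverse at the level of the underlying objects, the two models coincide. Throughout I read "stateless" as the operative requirement that the algorithm retains no information between queries (its per-query answer is a function of the query and $G$ only).

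For the direction from parallel decision trees to stateless \lca: given trees $\T_1,\dotsc,\T_n$ of depth $t(n)$ with $v\mapsto\T_v(G)$ feasible for every input graph $G$, I would define a \lca\ algorithm $\A$ that, on query $x\in[n]$, simply executes $\T_x$ against the oracle \oracleGm---starting at the root, issuing the probe prescribed by the current internal node, following the branch selected by the returned neighbour list, and after at most $t(n)$ probes outputting the label at the leaf reached. This $\A$ is deterministic, makes at most $t(n)$ probes per query, and carries nothing from one query to the next, so it is stateless; and the consistency of its answers is verbatim the hypothesis that $v\mapsto\T_v(G)$ solves $\mathcal{P}$.

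For the converse, from stateless \lca\ to parallel decision trees: let $\A$ be a deterministic stateless \lca\ with probe complexity $t(n)$, and fix a query $v\in[n]$. Because $\A$ is stateless and deterministic, its execution on input $v$ depends on nothing but $v$ and the neighbour lists returned by the probes it issues, so its probe sequence is a fixed adaptive strategy that branches only on these returned lists and halts within $t(n)$ probes; unfolding that strategy yields a depth-$t(n)$ decision tree $\T_v$ with $\T_v(G)$ equal to $\A$'s answer on query $v$ for every $G$. Ranging over $v\in[n]$ produces $\T_1,\dotsc,\T_n$, and the \lca\ consistency requirement is precisely the statement that $v\mapsto\T_v(G)$ is feasible for every $G$.

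The main point---and essentially the only thing that needs care---is the role of statelessness: it is exactly the "no memory between queries" condition that lets the per-query behaviour of $\A$ be peeled off into an independent object $\T_v$. For a state-full \lca\ this decomposition fails, since the answer to $v$ may depend on read-write memory populated while answering earlier queries, and on the order in which queries arrive; capturing that regime is what motivates the richer machinery of Section~\ref{sec:lcaparapp} rather than a plain tuple of decision trees. I expect this observation to carry the conceptual weight of the Fact; the rest is a routine unfolding of an adaptive probing procedure into a tree and back.
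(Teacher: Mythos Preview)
Your argument is correct. The paper does not actually supply a proof of this Fact; it is stated as an immediate consequence of the definitions (the sentence preceding it reads ``With this terminology, we can characterise parallel decision trees as follows''), and the authors rely on it without further justification. Your two translations---executing $\T_x$ against \oracleGm\ on query $x$, and conversely unfolding the deterministic, memoryless per-query behaviour of $\A$ into a depth-$t(n)$ tree $\T_v$---are exactly the intended content, and your remark that statelessness is precisely what allows the per-query behaviour to be isolated into an independent tree is the right conceptual point.
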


One can also consider a randomised \partree\ model in which every tree has access to an independent source of randomness. We note that the randomised stateless \lca\ model is stronger than this model since the algorithm has access to the same random seed throughout its entire execution.

\subsection{\boldmath\nice{} Graph Problems}

We say that a problem $\mathcal{P}$ defined over labeled graphs is \nice\ if the following holds:
\begin{enumerate}[noitemsep]
    \item There is a bound on the maximum degree denoted by $\Delta$.
    \item The problem remains invariant under permutation of the labels. Namely, the set of solutions $\sol(G,\mathcal{P})$ is the same for any permutation of the labels of $V(G)$.
    \item For every $f\in \sol(G,\mathcal{P})$ and every  connected component $C$ (maximal connected graph) of $G$, $f$ restricted to $C$ is in $\sol(C,\mathcal{P})$.\label{item:prop3}
\end{enumerate}

We note the family of \nice\ problems includes the so-called \emph{LCL problems} (locally checkable labellings) on bounded-degree graphs \cite{naor95what}.


Examples of \nice\ problems include maximal independent sets, minimal dominating sets, minimal vertex covers, and vertex colouring with $\Delta+1$ colours. With a straightforward generalisation, we can also consider problems in which the goal is to label edges (e.g., maximal matchings and edge colourings)
and problems in which the input graph is labelled (e.g., stable matchings)

\section{Our Contributions} \label{sec:cont}
\paragraph{\boldmath An implicit simulation of parallel decision trees in the \local\ model.}
Our main result is a simulation result connecting the \local{} model to the parallel decision trees.

\begin{restatable}[\partree\ to \local]{theorem}{mainthm} \label{thm:main}
Any \nice\ problem that can be solved in the parallel decision tree model with probe complexity $t(n)$ can be solved in time $t(n^{\log n})$ in the \local{} model provided $t(n) \ll \sqrt{\log n}$.
\end{restatable}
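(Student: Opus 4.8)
The plan is to turn the given parallel decision tree algorithm into a \local{} algorithm in which node $v$ spends $t(n^{\log n})$ rounds collecting its radius-$t(n^{\log n})$ ball in $G$ and then locally \emph{simulates} an appropriate decision tree. The only obstruction is that a tree $\T_v$ may probe vertices that are arbitrarily far from $v$ in $G$, so that $v$ cannot supply the answers. The key idea is to neutralise such remote probes using the two structural \nice{} properties: label-invariance lets us relabel the identifiers of $G$ arbitrarily, and component-decomposability (property~\ref{item:prop3}) lets us pad $G$ with disjoint dummy components without affecting the solution on any existing component. So every node will simulate the tree not on $G$ itself, but on a \emph{padded and relabelled} version of $G$ in which $G$ is hidden inside a vastly larger universe of isolated dummy vertices; a non-local probe then, with overwhelming probability, misses $G$ entirely and can be answered ``isolated'' without the simulating node knowing anything about the rest of the graph.

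Concretely: fix $N = n^{\log n}$ and work with the depth-$t(N)$ trees $\{\T_x\}_{x\in[N]}$, normalised (at a cost of one extra probe) so that $\T_x$ probes $x$ first. Given $G$ on vertex set $[n]$ and an injection $h\colon[n]\hookrightarrow[N]$, let $\hat G_h$ be the $N$-vertex graph consisting of the $h$-relabelled copy of $G$ together with $N-n$ isolated vertices; by \nice{} properties 2 and 3, any solution to $\hat G_h$, restricted to the relabelled copy and translated back through $h^{-1}$, is a valid solution to $G$. I would then prove: a uniformly random $h$ is, with probability $1-o(1)$, \emph{good}, meaning that for every query $v$, every probe that $\T_{h(v)}$ makes on $\hat G_h$ \emph{that lands on a vertex of the relabelled $G$} lies within graph-distance $t(N)$ of $h(v)$. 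The reason is that $\T_{h(v)}$ probes $h(v)$ first, and each later probe is either \emph{forced} --- its identifier already appeared in some revealed neighbour list, hence it is reachable from $h(v)$ by a BFS whose depth is at most the number of probes so far, so within distance $t(N)$ --- or \emph{fresh} --- a brand-new identifier, which, conditioned on the part of $h$ revealed so far, hits one of the at most $n$ identifiers of the relabelled $G$ with probability at most $n/(N-(\Delta+1)t(N))$. A union bound over the $\le t(N)$ probes and the $n$ queries gives failure probability $O(n^2 t(N)/N)$, which is $o(1)$ exactly because the hypothesis $t(n)\ll\sqrt{\log n}$ forces $t(N)=t(n^{\log n})\ll\log n$ (so in particular $\Delta^{t(N)}=\mathrm{poly}(n)$) while $N=n^{\log n}$.

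Given a good $h$ hard-wired into the algorithm, the \local{} algorithm is: node $v$ collects its radius-$t(N)$ ball $B$ in $G$, relabels $B$ via $h$, and simulates $\T_{h(v)}$, answering a probe to identifier $x$ by the relabelled real neighbour list if $x=h(w)$ for some $w\in B$ and by the empty list otherwise --- which, by goodness of $h$, is precisely the answer $\hat G_h$ would give. Node $v$ outputs the tree's verdict (a label from $L$, on which $h$ acts trivially). Since all nodes simulate the same tree family on the same graph $\hat G_h$, their outputs are the outputs of a valid solution to $\hat G_h$; restricting to the components of the relabelled $G$ (valid there by property~\ref{item:prop3}) and translating back through $h^{-1}$ yields a valid solution to $G$. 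The running time is $t(N)=t(n^{\log n})$ as claimed.

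The hard part will be reconciling the last paragraph's ``hard-wire a good $h$'' with the fact that a \local{} algorithm must be a single fixed procedure for \emph{all} $n$-vertex inputs: one $h$ must be simultaneously good for every $G$ on $[n]$, and a naive union bound over the (exponentially many) graphs would blow $N$ up far past $n^{\log n}$. The way I would attack this is to observe that a fresh probe which lands on a vertex of $G$ lying in a \emph{different} component from $v$ is harmless --- by property~\ref{item:prop3} only $v$'s own component determines $v$'s output --- so only fresh probes into $v$'s own component are dangerous, and the accounting should be redone per component (whose relabelled image occupies a negligible fraction of $[N]$) rather than per graph; getting a universe of size only $n^{\log n}$ to suffice, together with making the choice of $h$ genuinely independent of $G$, is the technical heart of the argument. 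The adaptivity of the decision tree in the probabilistic step (the conditioning on the revealed part of $h$) is a secondary, essentially routine, difficulty.
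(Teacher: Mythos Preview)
Your setup and simulation are essentially the paper's: pad $G$ with a virtual graph on $[N]\setminus[n]$ with $N=n^{\log n}$, relabel randomly, and simulate $\T_{h(v)}$ locally, failing when a non-local probe hits the tiny copy of $G$; your per-node failure bound $O(t(N)\cdot n/N)$ matches. (The paper allows a general bounded-degree $H$ rather than only isolated vertices, so as to also cover problems whose instance class excludes isolated vertices, but for \nice{} as formally stated your choice suffices.)

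The genuine gap is the derandomisation, which you correctly flag as the crux but do not solve. Your proposed fix---declare fresh probes into other components harmless because ``by property~\ref{item:prop3} only $v$'s own component determines $v$'s output''---misreads property~\ref{item:prop3}: it constrains the \emph{solution set}, not the decision tree, and $\T_{h(v)}$ may well probe other components and let the answers steer its output, so wrongly answering ``isolated'' there can desynchronise the simulations of two nodes in the same component. Even after patching this (arguing that all nodes of a component $C$ consistently simulate on the graph ``$h(C)$ plus isolated vertices''), you still face a union bound over all labelled connected subgraphs $C$ on $[n]$, which is again exponential and does not fit under $N=n^{\log n}$. The paper's derandomisation is different and cleaner: whether the simulation at $v$ fails is determined solely by the at most $k=1+(\Delta+1)t(N)$ vertices of $G$ that the tree actually touches (together with their labels and adjacencies), and there are at most $n^{\Delta k}$ such $k$-vertex labelled graphs. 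A union bound over the $n$ starting nodes and these $n^{\Delta k}$ probe-views gives total failure probability at most $n\cdot n^{\Delta k}\cdot n^2/N$; since $t(n)\ll\sqrt{\log n}$ forces $k=o(\log n)$, this is $n^{o(\log n)-\log n}\ll 1$, so a single fixed permutation works for all $n$-vertex inputs simultaneously. That replacement of the exponential union bound by one of size $n^{o(\log n)}$ is the missing step.
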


For example, if $t(n)=O(1)$ then the simulation will have the \emph{same} run-time. If $t(n)=\log^* n$ then the simulation will have run-time $\log^* n +1$. Our result also implies a simulation result for a large class of optimization problems.

\paragraph{\boldmath An implicit simulation of \lca\ algorithms in the \local\ model.}
In Section~\ref{sec:lcaparapp} we observe that one can simulate query-order-oblivious \lca\ algorithms by stateless algorithms.
Hence, Theorem~\ref{thm:main} applies to the \lca\ model w.r.t.\ query-order-oblivious deterministic algorithms, formalised as follows.
\begin{restatable}[\lca\ to \local]{theorem}{lblca}
\label{thm:lblca}
  For every query-order-oblivious (or stateless) deterministic \lca\ algorithm $D$ that solves a problem $\mathcal{P} \in $ \nice\, with probe complexity $t(n)=o(\sqrt{\log n})$, there is \local\ algorithm that solves $\mathcal{P}$ by simulating $D$, and for which the number of rounds is at most $t(n)$.
\end{restatable}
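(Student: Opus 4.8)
The plan is to derive Theorem~\ref{thm:lblca} as a corollary of the main simulation result, by pushing the given algorithm through two cheap model translations down into the parallel decision tree model and then invoking Theorem~\ref{thm:main}.

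First I would handle the translations. If $D$ is query-order-oblivious, then by the equivalence established in Section~\ref{sec:lcaparapp} between query-order-oblivious deterministic \lca\ algorithms and stateless deterministic \lca\ algorithms, we may assume without loss of generality that $D$ is stateless, with unchanged probe complexity $t(n)$. Then, by Fact~\ref{fact:lcapar}, a deterministic stateless \lca\ algorithm is literally the same object as a deterministic parallel decision tree algorithm: taking $\T_v$ to be $D$ run on query $v$ yields decision trees $\T_1,\dotsc,\T_n$ of depth $t(n)$ whose outputs $v \mapsto \T_v(G)$ form a feasible solution to $\mathcal{P}$ on every $G$. Hence $\mathcal{P}\in\nice$ is solvable in the parallel decision tree model with probe complexity $t(n)$.

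Second, since $t(n)=o(\sqrt{\log n})$ in particular gives $t(n)\ll\sqrt{\log n}$, Theorem~\ref{thm:main} applies and produces a \local\ algorithm for $\mathcal{P}$ — obtained by simulating exactly these decision trees, hence by simulating $D$ — that runs in $t(n^{\log n})$ rounds. The only remaining point is the bookkeeping that compares $t(n^{\log n})$ with $t(n)$: using $t(n)=o(\sqrt{\log n})$ together with the growth rate of $t$, one checks the round count matches the claim in the cases singled out after Theorem~\ref{thm:main} (e.g.\ $t(n)=O(1)$ gives $t(n^{\log n})=t(n)$, and $t(n)=\log^* n$ gives $t(n^{\log n})\le \log^* n+1$), the general bound differing from $t(n)$ only through the $n\mapsto n^{\log n}$ reparametrisation.

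I expect this last estimate to be the only step demanding any care; everything else is a direct composition of results already in hand, and correctness — that the \local\ algorithm outputs a genuine solution to $\mathcal{P}$ on every input graph — is inherited verbatim from Theorem~\ref{thm:main}, where all the substantive work resides.
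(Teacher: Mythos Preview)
Your proposal is correct and follows exactly the paper's own derivation: the paper states that Theorem~\ref{thm:lblca} follows directly from Theorem~\ref{thm:main}, Fact~\ref{fact:lcapar}, and Lemma~\ref{lemma:qoo}, which is precisely the chain of reductions you spell out. You are in fact more careful than the paper about the $t(n)$ versus $t(n^{\log n})$ bookkeeping; the paper leaves this implicit, and its own examples (e.g.\ $\log^* n \mapsto \log^* n + 1$) show that ``at most $t(n)$'' is meant in the asymptotic sense you describe.
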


As an application of Theorem~\ref{thm:lblca} we show the optimality of several known algorithms as shown in Table~\ref{tab:lb}.

\paragraph{\boldmath A separation between stateless and state-full \lca\ algorithms.}
In Section~\ref{sec:statepower} we show a linear gap in the probe complexity between a \lca\ algorithm with (only) logarithmic state, and a stateless algorithm that computes a leader in variant of the leader election problem.

\paragraph{\boldmath An explicit localised simulation of \lca\ algorithms.}

In Section~\ref{sec:constlca} we show an explicit simulation in which a \lca\ algorithm that is allowed to probe anywhere in the graph is simulated by a \lca\ algorithm that has the following property: the subgraph induced on its probes is connected. Moreover, the overhead in the probe complexity and the seed length is moderate, formalised as follows.
\begin{restatable}[Explicit \lca\ to \lca]{theorem}{mainlocal}
\label{thm:mainlocal}
Let $A$ be a query-order-oblivious $\lca$ algorithm that solves a \nice\ problem $\mathcal{P}$ with probe-complexity $t(n) = O(n^{1/4}/\Delta)$ and seed length $s(n)$.
  Then, there is a query-order-oblivious $\lca$ algorithm $B$ that solves $\mathcal{P}$ by simulating algorithm~$A$. Algorithm $B$ has a probe-complexity of $t(n^4)$ and a seed length of $s(n^4)+O(t(n^4)\cdot \Delta \cdot \log n)$ and the property that it probes within a radius of $t(n^4$); moreover the subgraph induced on the probes of $B$ is connected. The error probability of $B$ equals to the error probability of $A$ plus $O(1/n)$.
\end{restatable}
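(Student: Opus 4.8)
The plan is to combine the implicit simulation of Theorem~\ref{thm:main} (via parallel decision trees in the \local{} model) with a standard ``neighbourhood-exploration'' implementation of a \local{} algorithm as a localised \lca{} algorithm. The composition gives an \lca{} algorithm whose probes form a connected subgraph, because a \local{} algorithm run at a node $v$ only needs to inspect $v$'s radius-$t$ ball, which is connected and can be gathered by a breadth-first exploration starting from $v$.

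First, I would observe that $A$, being a query-order-oblivious \lca{} algorithm solving a \nice{} problem, is equivalent to a deterministic stateless \lca{} algorithm (by the equivalence shown in Section~\ref{sec:lcaparapp}), and hence by Fact~\ref{fact:lcapar} is a parallel decision tree algorithm of depth $t(n)$ --- modulo the random seed, which we carry along as a fixed shared string. Strictly speaking $A$ is randomised with seed length $s(n)$, so I would first fix the seed and apply the deterministic simulation to the resulting deterministic family of decision trees, then re-introduce the seed at the end. Applying Theorem~\ref{thm:main} (with the parameter renamed: we run the construction on graphs of size $N$, so the resulting \local{} running time is governed by $t(N^{\log N})$) yields a \local{} algorithm $\A_{\mathrm{loc}}$ that solves $\mathcal{P}$. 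The hypothesis $t(n) = O(n^{1/4}/\Delta)$ together with the bookkeeping in Theorem~\ref{thm:main} is what forces the blow-up from $n$ to $n^4$ in the statement: running the simulation ``as if'' the graph had size $n^4$ makes the composed round complexity come out as $t(n^4)$, which is why $B$'s probe complexity and seed length are stated in terms of $n^4$. I would make this size substitution explicit and check that $t(n^4) \ll \sqrt{\log(n^4)}$ follows from $t(n)=O(n^{1/4}/\Delta)$ when we only care about the regime where the theorem is non-trivial (i.e. $t$ polylogarithmic), so that Theorem~\ref{thm:main} applies.

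Next, I would implement $\A_{\mathrm{loc}}$ as an \lca{} algorithm $B$: to answer a query $v$, algorithm $B$ performs a BFS from $v$ in the oracle \oracleG{}, probing every vertex within distance $r := t(n^4)$ of $v$, thereby reconstructing the radius-$r$ neighbourhood of $v$ (including identifiers and, if present, input labels), and then runs $\A_{\mathrm{loc}}$'s local function on that neighbourhood to produce the output for $v$. Since the graph has maximum degree $\Delta$, this costs at most $\sum_{i=0}^{r}\Delta^i = O(\Delta^r)$ probes; but more to the point, the probes all lie within radius $r$ of $v$ and the induced subgraph on the set of probed vertices is connected, because BFS only ever probes a vertex after having discovered it as a neighbour of an already-probed vertex. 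This is precisely the connectivity property demanded. The random seed of length $s(n^4)$ is the seed $\A_{\mathrm{loc}}$ inherits from $A$; the extra additive term $O(t(n^4)\cdot\Delta\cdot\log n)$ in the seed length is the space/randomness needed for the simulation's internal machinery (e.g. the hashing or tie-breaking used inside the proof of Theorem~\ref{thm:main} over a ball of $O(\Delta^{t})$ vertices, each with an $O(\log n)$-bit identifier) --- I would extract this bound from the construction in Section~\ref{sec:simulation}. Consistency of $B$'s answers is immediate: they are exactly the outputs of the single well-defined \local{} algorithm $\A_{\mathrm{loc}}$, which by Theorem~\ref{thm:main} produces a feasible solution to $\mathcal{P}$; the error probability of $B$ is therefore at most the error probability of $\A_{\mathrm{loc}}$, which is at most that of $A$ plus the $O(1/n)$ slack introduced by the simulation (e.g. a union bound over a polynomial number of ``bad seed'' events in the proof of Theorem~\ref{thm:main}).

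The main obstacle I anticipate is not the BFS implementation --- that is routine --- but rather getting the parameter bookkeeping exactly right: tracking how the ``effective size'' $n^{\log n}$ appearing in Theorem~\ref{thm:main} interacts with the substitution $n \mapsto n^4$, verifying that the side condition $t \ll \sqrt{\log n}$ is preserved under this substitution given only $t(n)=O(n^{1/4}/\Delta)$, and pinning down the additive seed-length overhead $O(t(n^4)\,\Delta\log n)$ as precisely what the simulation of Section~\ref{sec:simulation} consumes when restricted to a radius-$t$ ball. I would also need to be careful that ``query-order-oblivious'' is genuinely what lets us invoke the stateless equivalence, and that re-randomising (fixing the seed, simulating, then quantifying over seeds) does not secretly require more independence than a single shared seed provides --- but since both $A$ and $\A_{\mathrm{loc}}$ use one shared seed, this is consistent.
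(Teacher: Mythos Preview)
Your approach has a genuine structural gap: routing through Theorem~\ref{thm:main} and then implementing the resulting \local{} algorithm by BFS does \emph{not} yield the claimed probe complexity. As you yourself note, a BFS of radius $r=t(n^4)$ costs $O(\Delta^{r})$ probes, not $t(n^4)$ probes; the theorem asserts the latter. There is no way to recover the linear-in-$t$ probe bound once you have collapsed $A$ into a radius-$t$ \local{} function, because a \local{} function genuinely may depend on all of the ball. Two further mismatches compound this: Theorem~\ref{thm:main} requires $t(n)\ll\sqrt{\log n}$, which is \emph{not} implied by $t(n)=O(n^{1/4}/\Delta)$ (the present theorem is meant to cover $t$ up to polynomial), and Theorem~\ref{thm:main} uses $N=n^{\log n}$, not $N=n^4$; the quasipolynomial size is forced there precisely by the derandomisation union bound, so you cannot extract $n^4$ from it. Finally, the seed overhead $O(t(n^4)\,\Delta\log n)$ does not come from ``hashing inside Theorem~\ref{thm:main}'' --- that theorem is fully deterministic and has no seed at all.

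What the paper actually does is rerun the simulation of Section~\ref{sec:simupar} \emph{directly in the \lca{} model}, probe-by-probe, without ever passing through \local{}. Algorithm $B$ simulates $A$ on $G\cup H$ with $|G\cup H|=N=n^4$: each of $A$'s $t(N)$ probes is either answered from the virtual graph $H$ (no oracle call) or is a local probe to a discovered vertex of $G$ (one oracle call), so $B$ makes at most $t(n^4)$ probes, and the simulation invariant guarantees those probes lie in a connected ball around the query. The polynomial $N=n^4$ suffices because $B$, unlike a \local{} algorithm, has a shared random seed and therefore need not derandomise: the permutation $\pi$ is drawn at random and kept. The seed overhead is exactly the description length of $\pi$, which is taken from a $k$-wise $\epsilon$-dependent family (Theorem~\ref{thm:kaplan}) with $k=1+(\Delta+1)t(N)$; Lemma~\ref{lemma:unitokwise} shows $k$-wise independence suffices to bound the simulation's failure probability as in~\eqref{eq:simfails}, and Theorem~\ref{thm:alon} bridges $k$-wise $\epsilon$-dependent to $k$-wise independent. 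Plugging in $k$ and $\epsilon=\Theta(n^{-1}N^{-4k})$ gives the stated $O(t(n^4)\,\Delta\log n)$ seed overhead and the $O(1/n)$ additive error.
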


\paragraph{\boldmath Lower bounds for optimisation problems.}
In addition to the optimization lower bounds implied by Theorem~\ref{thm:main}, in Section~\ref{sec:opt-lbs} we give sublogarithmic ad hoc lower bounds for approximating maximum independent set and maximum cut. This also implies a lower bound on the number of colours required when colouring bipartite graphs. These bounds apply both in the \local{} model and in the parallel decision tree model.


\section{Simulating Probes in the \local{} Model} \label{sec:simulation}

In this section we prove Theorem~\ref{thm:main}, restated here for convenience.
\mainthm*

\subsection{Overview}

Fix an input size $n\in\mathbb{N}$. In order to solve the graph problem on an $n$-node graph $G$ in the \local{} model we simulate the decision tree $\T$ \emph{not} on $G$ directly, but on a much larger graph $G\cup H$ that is the disjoint union of $G$ and some \emph{virtual} graph $H$. The structure of the virtual graph $H$ is agreed upon ahead of time by all the nodes participating in the simulation. Before we invoke the decision tree on $G\cup H$, however, we first \emph{reshuffle} its identifiers randomly. The key idea is to show that this reshuffling trick fools the decision tree: if we simulate $\T$ on a node $v$ and $\T$ probes some node outside of $v$'s local neighbourhood, then the probe lands in $H$ with overwhelming probability. Since all nodes in $G$ know the structure of $H$, they can answer such ``global'' probes consistently. Finally, we argue that there is some fixed choice of randomness that makes the simulation succeed on all instances of order $n$. This makes the simulation deterministic.

\subsection{Setup}

Let $N = n^{\log n}$. The running time of our simulation will be $t(N)\ll \log n$.

Let $H$ be any bounded-degree graph on the vertex set $[N]\smallsetminus[n]$. For example, if we are studying a colouring problem on cycles, we can let $H$ be a cycle. This graph remains fixed and does not depend on the $n$-node input graph $G$. In this sense, all the nodes in the simulation will know the structure of $H$.

Let $\pi\colon[N]\to[N]$ be a random permutation. Again, all the nodes in the simulation are assumed to agree on the same $\pi$. (The simulation can be thought of as being defined with respect to the outcomes of some public random coins.)

\subsection{Simulation}\label{sec:simupar}

During the simulation we will always present the decision tree with a relabelled version of the graph $G\cup H$ where a node $v$ has identifier $\pi(v)$. That is, when $\T$ probes an identifier $w$ this is interpreted as $\pi^{-1}(w)$ and when we respond with node $v$ we relabel its identifier as $\pi(v)$. Note that each node in $G$, knowing $\pi$, can perform this translation locally.

A node $v\in V(G)$ starts its simulation by invoking the decision tree on $\pi(v)$ with the hope of answering its $t(N)$ probes based on the $t(N)$-neighbourhood of $v$ in $G$ and knowledge of $H$. During this attempt we maintain a set of \emph{discovered nodes} $Q$ that contains all the nodes whose relabelled versions have been sent back to $\T$ in response to a probe. Note that $|Q| \leq k$ where $k \triangleq 1 + (\Delta+1) \cdot t(N)$ is an upper bound on the total number of inspections of $\pi$ performed by the simulation. Initially $Q=\{v\}$, so that only $v$ is discovered. We will ensure that the following invariant holds throughout the simulation:
\begin{quote}
\emph{Simulation invariant:}
After the $s$-th step every member of $Q$ is either in $H$ or at most at distance $s$ from $v$.
\end{quote}
Suppose $\T$ probes a relabelled identifier $w$ corresponding to a node $u = \pi^{-1}(w)$. We have two cases:
\begin{itemize}[itemsep=5pt,label=$\bullet$]
\item {\bf Local probe, \boldmath$u\in Q$.} In this case we are always successful: by the invariant, the neighbours of $u$ are known to $v$ so we can add them to $Q$ and return their relabelled versions to $\T$.
\item {\bf Global probe, \boldmath$u\notin Q$.} Two sub-cases depending on whether $u\in V(H)$:
\begin{itemize}[topsep=5pt,noitemsep,label=$-$]
\item {\boldmath $u\in V(H)$:} Success! The structure of $H$ is known to $v$ so we can add the neighbours of $u$ to $Q$ and return their relabelled versions to $\T$.
\item {\boldmath $u\notin V(H)$:} Here we simply say that the simulation has \emph{failed} and we terminate the simulation. (Note that by this convention we may fail even if $u$ is in the $t(N)$-neighbourhood of $v$; however this convention helps us maintain the invariant.)
\end{itemize}
\end{itemize}
When the decision tree finally returns an output label for $\pi(v)$, we simply output the same label for $v$.

\subsection{Failure Probability}
Next, we analyse the probability that our simulation fails when invoked on a particular node. Suppose we need to respond to a global probe sometime in our simulation. That is, the probe is to a node whose relabelled identifier is $w$ and the node has not been discovered yet, i.e., $w\notin \pi(Q)$. By the principle of deferred decisions, we can think of the node $u=\pi^{-1}(w)$ as being uniformly distributed among the undiscovered nodes $[N]\smallsetminus Q$. Hence the failure probability (conditioned on any outcome of $\pi(Q)$) is
\begin{equation}
\Pr[u\in V(G)] = \frac{|V(G)\smallsetminus Q|}{|[N] \smallsetminus Q|} \leq \frac{n}{N-k}
\end{equation}
Consequently, by the union bound \begin{equation}\label{eq:simfails}
\Pr(\text{simulation fails at some probe step}) \leq k \cdot \frac{n}{N-k}\ ,
\end{equation}
which is at most $\frac{n^2}{N}$ for any $k \leq n/2$.
\subsection{Derandomisation}

Next, we would like to argue that there is a fixed choice for~$\pi$ that makes the simulation succeed on all $n$-node graphs $G$. To this end, we note that on any run of $\T$, the final output of $\T$ depends on at most $k$ nodes (and their adjacency relations) in the input graph. There are at most $n^{\Delta k}$ graphs on $k$ nodes having identifiers from $[n]$; let $\G$ consist of all such graphs. For technical convenience we can add dummy nodes to each graph in $\G$ to make their vertex set be~$[n]$. Executing our simulation on each node ($n$ choices) of each graph in $\G$ ($|\G|$ choices) the probability that some simulation fails (each fails with probability $\leq n^2/N$) is at most
\[
n\cdot|\G|\cdot n^2 / N = n^{-\Omega(\log n)} \ll 1
\]
by yet another union bound. Thus, we can find a fixed outcome of $\pi$ for which the simulation succeeds simultaneously on all of $\G$ and therefore on all $n$-node graphs.

\subsection{Correctness}

It remains to point out that the output labelling $f$ produced by the simulation constitutes a feasible solution to the graph problem under consideration. Here it suffices to assume that the graph problem satisfies the following property: if $f$ is a feasible solution for a graph $G$ and $C\subseteq G$ is a connected component of $G$, then the restriction of $f$ to $V(C)$ is a feasible solution for $C$ and that this remains so under any relabelling of the identifiers. In particular, all \nice\ problems satisfy this property.

\subsection{Simulation for Optimization}\label{sec:optim}

We note that for the correctness we used a weaker property than property~(\ref{item:prop3}) of \nice\ problems. In fact the correctness applies to any problem $\mathcal{P}$ such that for all $n$ there exists a graph $H$ on $n$ vertices and with maximum degree at most $\Delta$ such that for every $f \in \sol(H\cup G, \mathcal{P})$ we have that $f$ restricted to $G$ is in $\sol(G, \mathcal{P})$.
By taking $H$ to be the graph with no edges we obtain that Theorem~\ref{thm:main} can be generalised to, for example, $(1-\eps)$-approximated maximum (weighted) matching and approximation of vertex covers.


\section{Centralised Local Model and Parallel Decision Trees}\label{sec:lcaparapp}

In this section we observe that one can simulate query-order-oblivious \lca\ algorithms by stateless algorithms.
Recall that a deterministic stateless \lca\ algorithm is a \partree\ algorithm (Fact~\ref{fact:lcapar}). Hence, the simulation result (Theorem~\ref{thm:main}) applies to the \lca\ model w.r.t.\ query-order-oblivious deterministic algorithms.
We then summarise the obtained lower bounds and show the optimality of several known algorithms.
We conclude with a separation between stateless and state-full \lca\ algorithms: we show a linear gap in the probe complexity between a \lca\ with (only) logarithmic state and a stateless algorithm that computes a leader in variant of the leader election problem.

\subsection{\boldmath Query-Order-Oblivious vs. Stateless \lca\ Algorithms}

We say that a \lca\ algorithm is \emph{query-order-oblivious} if the (global) solution that the algorithm computes does not depend on the input sequence of queries.
Even et al.~\cite{even14best} state that a stateless \lca\ algorithm is also query-order-oblivious.
In the following lemma we prove that the converse is also true, that is, for every query-order-oblivious \lca\ algorithm there is a stateless algorithm that can simulate it.

\begin{lemma}\label{lemma:qoo}
  For every query-order-oblivious \lca\ algorithm $C$ there is a stateless \lca\ algorithm $S$ that simulates $C$. Moreover, the probe complexities of $C$ and $S$ are equal.
\end{lemma}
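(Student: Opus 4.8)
The plan is to show that a stateless algorithm $S$ can recover the unique solution computed by the query-order-oblivious algorithm $C$ by simulating $C$ on a *canonical* query sequence, namely the sequence that asks for the answer at the single queried vertex. Fix a query $x$. The stateless algorithm $S$ will run $C$ from its initial (empty) state on the one-element query sequence $(x)$, forwarding every oracle probe that $C$ makes to its own oracle, and returning whatever $C$ returns for $x$. Since $S$ carries no memory between queries, each invocation of $S$ starts $C$ afresh; the only thing we must check is that this always yields a globally consistent solution, and that the probe complexity is preserved.

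The key step is to argue consistency. Because $C$ is query-order-oblivious, there is a single function $g = g_{C,G}\colon V \to L$ (the ``global solution'') such that for *any* query sequence $x_1, x_2, \dots$, the answer $C$ produces for $x_i$ equals $g(x_i)$, regardless of the position $i$ and of the other queries. In particular, applying this to the length-one sequence $(x)$, the answer of $C$ when first queried on $x$ from its initial state is exactly $g(x)$. Hence $S$ on query $x$ outputs $g(x)$, and since $g \in \sol(G,\mathcal{P})$ by assumption on $C$, the answers of $S$ over all queries constitute the valid solution $g$. So $S$ is a stateless \lca\ algorithm solving $\mathcal{P}$.

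For the probe complexity: by definition the probe complexity of $C$ is the maximum, over all queries and all states reachable under some query sequence, of the number of probes $C$ makes to answer one query; in particular it upper-bounds the number of probes $C$ makes when answering $x$ from the initial state. Thus $S$ makes at most $t(n)$ probes per query, so its probe complexity is $\le$ that of $C$. Conversely, the length-one query sequence $(x)$ is itself a legal query sequence for $C$, so $C$'s probe complexity is witnessed already by runs from the initial state on single queries --- exactly the runs $S$ performs --- giving the matching inequality and hence equality. (In the randomised case one fixes the shared random tape and repeats the same argument verbatim, since query-order-obliviousness is defined with respect to a fixed seed.)

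I do not expect any serious obstacle here; the lemma is essentially definitional once one pins down the right formalisation of ``query-order-oblivious.'' The only mildly delicate point is to make sure the definition of query-order-obliviousness is strong enough to guarantee that the answer on the *first* query from the *initial* state already equals the global solution value --- i.e.\ that obliviousness is not merely ``the final global solution is independent of order'' but genuinely ``each individual answer is the global-solution value.'' These are equivalent for algorithms that always produce a consistent solution, but it is worth stating explicitly, and that is the one place where a careful sentence is needed rather than a calculation.
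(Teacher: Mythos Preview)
Your construction is exactly the paper's: run $C$ from its initial state (and with the same random seed) on each query in isolation. The paper's own proof is a single sentence to this effect, so your more detailed consistency argument is fine and indeed more explicit than what the paper provides.

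One small gap: your ``conversely'' step does not actually establish the reverse inequality. Saying that length-one sequences are legal for $C$ only shows again that $C$'s probe complexity is at least the number of probes on any initial-state run, i.e.\ it re-proves $\text{probes}(S)\le\text{probes}(C)$. To get equality you would need that $C$'s worst case is \emph{attained} on some initial-state run, and nothing in the definition of query-order-obliviousness forces this (a wasteful $C$ could make extra probes only in non-initial states). The paper does not justify equality either, and for every downstream use in the paper only the inequality $\text{probes}(S)\le\text{probes}(C)$ is needed; so the point is cosmetic, but your sentence as written asserts more than it proves.
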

\begin{proof}
  The stateless algorithm $S$ simply invokes $C$ with its initial state and the same random seed for every input query.
\end{proof}

\subsection{\boldmath \lca\ vs. \local}\label{sec:lcaapp}
Parnas and Ron~\cite{parnas07approximating} observed that given a deterministic \local\ algorithm $D$ that performs $r$ rounds of communication, there is a \lca\ algorithm $C$ that simulates $D$ with probe complexity which is $O(\Delta^r)$.
Even et al.~\cite{even14best} observed that if a deterministic \lca\ algorithm $C$ probes in an $r$-neighborhood of each queried vertex, then there is a deterministic \local\ algorithm $D$ that simulates $C$ in $r$ communication rounds.
Theorem~\ref{thm:main} implies that for some \lca\ algorithms there is a \local\ (implicit) simulation such that the number of communication rounds is asymptotically equal to the probe complexity, even though the \lca\ algorithm probes outside of the $r$-neighborhood of a queried vertex. This argument allows carrying lower bounds to the \lca\ model from the \local\ model. In Section~\ref{sec:constlca} we show an explicit simulation in which a \lca\ algorithm that is allowed to probe anywhere in the graph is simulated by a \lca\ algorithm that probes along connected components only.

Theorem~\ref{thm:main} and Fact~\ref{fact:lcapar} along with Lemma~\ref{lemma:qoo} imply the following theorem.
\lblca*

In Section~\ref{sec:optim} we observed that Theorem~\ref{thm:main} also applies to several optimization problems.
In Table~\ref{tab:lb} we summarise (1)~the known \lca\ algorithms, their probe complexities as well as the obtained approximation ratios, and (2)~corresponding \local\ lower bounds. By Theorem~\ref{thm:lblca} all stated lower bounds apply to deterministic, query-order-oblivious \lca\ algorithms.

\begin{table}[htb]
\centering
\begin{tabular}{lllll}
\toprule
Problem
 & \multicolumn{2}{l}{\lca\ upper bounds}
 & \multicolumn{2}{l}{\local\ lower bounds} \\
 &  (deterministic, 0-space) \\
 & \# probes && \# rounds \\
\midrule
\mis     & $O(\log^* n)$ & \cite{even14best} &  $\Omega(\log^* n)$ & \cite{linial92locality}\\
\mm     & $O(\log^* n)$ & \cite{even14best} &  $\Omega(\log^* n)$ & \cite{linial92locality}\\
\dcolor  & $O(\log^* n)$ & \cite{even14best} &  $\Omega(\log^* n)$ & \cite{linial92locality}\\
$(1-\eps)$-\mcm & $\poly(\log^* n)$ & \cite{even14best} & $\Omega(\log^* n)$ & \cite{lenzen2008leveraging,czygrinow2008fast}\\
$(1-\eps)$-\mwm & $\poly(\min\{\Gamma,n/\eps\}\cdot\log^*n)$ & \cite{even14best} & $\Omega(\log^* n)$ & \cite{lenzen2008leveraging,czygrinow2008fast}\\
\bottomrule
\end{tabular}
\caption{\mis\ denotes maximal independent set, \mm\ denotes maximal matching, \dcolor\ denotes $\Delta+1$ vertex colouring,  $(1-\eps)$-\mcm\ denotes $(1-\eps)$-approximated maximum cardinality matching, and
  $(1-\eps)$-\mwm\ denotes $(1-\eps)$-approximated maximum weighted matching.
All the upper bounds presented in this table are of algorithms which are deterministic and stateless.
All the upper bounds are presented under
the assumption that $\Delta = O(1)$ and $\eps = O(1)$.
For weighted graphs, the ratio between the maximum to
minimum edge weight is denoted by $\Gamma$.
}\label{tab:lb}
\end{table}

\subsection{\boldmath The Power of State-full \lca\ Algorithms}\label{sec:statepower}
In this section we prove that the ratio between the probe complexity of a stateless \lca\ algorithm and a state-full algorithm is $\Omega(n)$  in a variant of the leader-election problem, defined as follows. This linear gap occurs even if the state-full algorithm uses only logarithmic number of bits to encode its state.

\begin{definition}[Two-path leader-election]
The input is a graph $G=(V,E)$, where $|V| \geq 6$.
In this graph there are two vertex-disjoint paths $p_i=(v^i_1,v^i_2,v^i_3)$ for $i \in \{1,2\}$. The rest of the vertices are of degree zero.
The ID set is $\{1,\ldots,n\}$ where $n$ is the number of vertices.\footnote{Note that the Two-path leader-election problem requires $\Omega(n)$ probes in the stateless \lca\ model even when the ID domain is exactly $\{1, \ldots, n\}$.}

The local output is $1$ in exactly one vertex in $\{v^1_2, v^2_2\}$. All the the other vertices output $0$.
\end{definition}

Note that a state-full \lca\ algorithm requires at most $O(\log n)$ space. The state-full algorithm proceeds as follows. The algorithm output ``no'' if the queried vertex has degree $0$ or~$1$. Otherwise, if the queried vertex is the first ``middle'' vertex, then it outputs $1$ and writes its ID to the state.
The next queried vertices will output $0$, even if their degree is $2$.

The linear gap between stateless \lca\ algorithms and state-full algorithms is formalised in the following lemma.
\begin{theorem}
  Every deterministic stateless \lca\ algorithm requires $\Omega(n)$ probes to compute a leader in the Two-path leader-election problem.
\end{theorem}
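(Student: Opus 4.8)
The plan is to use the identification of deterministic stateless \lca\ algorithms with parallel decision trees (Fact~\ref{fact:lcapar}): such an algorithm is a family $\{\T_v\}_{v\in[n]}$ of depth-$t$ decision trees, one per vertex, whose outputs $v\mapsto\T_v(G)$ form a feasible solution on every input $G$. Suppose for contradiction that the probe complexity satisfies $t=t(n)<\delta n$ for a small absolute constant $\delta>0$ and all large $n$. Morally, statelessness forces $\T_{v^1_2}$ alone to decide whether $v^1_2$ is the elected leader, and likewise $\T_{v^2_2}$; but with $o(n)$ probes neither tree can locate the other path, so each tree's answer depends only on the isomorphism type of its own path, and no fixed rule of this form can elect exactly one of the two leaders. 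I would turn this into a contradiction via a parity argument on three mutually compatible instances.

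Concretely, I would pick three pairwise vertex-disjoint length-$3$ paths $T_1,T_2,T_3$ on $[n]$ uniformly at random, with $T_i$ having middle vertex $\beta_i$ (so $9$ distinct vertices are involved; assume $n\ge 9$). For $i\ne j$ let $G_{ij}$ be the instance whose two paths are exactly $T_i$ and $T_j$ and all remaining vertices---in particular those of the third path $T_k$---are isolated; note $G_{ij}$ and $G_{ji}$ are the same graph. Let $E_{ij}$ be the event that $\T_{\beta_i}$, run on $G_{ij}$, probes some vertex of $(T_1\cup T_2\cup T_3)\setminus T_i$. The hiding lemma states: if $E_{ij}$ does not occur, then $\T_{\beta_i}$ behaves identically (same probes, same answers, same output) on $G_{ij}$ and on $G_{ik}$; one proves this by induction on the probe step, using that the two graphs differ only on vertices that $\T_{\beta_i}$ never touches. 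Conditioned on $\neg E_{ij}$ we may therefore write $c_i\in\{0,1\}$ for the common output of $\T_{\beta_i}$ on $G_{ij}$ and $G_{ik}$. Assuming $\neg E_{12}$, $\neg E_{21}$, and $\neg E_{31}$ all hold, feasibility of the solution on $G_{12}$, $G_{13}$, and $G_{23}$---each such instance must have exactly one of its two middle vertices output $1$---forces $c_1+c_2=c_1+c_3=c_2+c_3=1$ (using $\T_{\beta_2}(G_{12})=\T_{\beta_2}(G_{21})=c_2$ and the analogous identities), and adding the three equations yields $2(c_1+c_2+c_3)=3$, which is impossible over the integers. This is the contradiction.

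It remains to show that a uniformly random choice of $T_1,T_2,T_3$ makes $E_{12},E_{21},E_{31}$ all fail with positive probability, and this is where the probe budget enters. Fix $T_i$; conditioned on this, $G_{ij}$ is determined by the placement of the random path $T_j$ among the other $n-3$ vertices. Running $\T_{\beta_i}$ against this by the principle of deferred decisions, each fresh probe lands in $T_j$ with probability at most $3/(n-3-t)$, so $\Pr[\T_{\beta_i}\text{ probes }T_j]=O(t/n)$. Conditioned on that not happening, the probe set of $\T_{\beta_i}$ is a determined set of at most $t$ vertices, while $T_k$ is a uniformly random length-$3$ path among $n-6$ further vertices and thus meets this set with probability $O(t/n)$ as well. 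Hence $\Pr[E_{ij}]=O(t/n)$, and a union bound over the three relevant events gives total probability $O(t/n)<1$ once $\delta$ is small enough. So a choice of $T_1,T_2,T_3$ avoiding all three events exists, the parity argument applies, and we conclude $t(n)=\Omega(n)$.

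The step I expect to require the most care is the hiding lemma, specifically choosing the bad events so that the comparison between $G_{ij}$ and $G_{ik}$ is genuinely justified: it is not enough to forbid $\T_{\beta_i}$ from probing the path $T_j$ that it actually faces in $G_{ij}$---the event $E_{ij}$ must also forbid probing the ``reserved'' isolated vertices $T_k$, since those become a real path in $G_{ik}$. A second point worth stating carefully rather than glossing over is that the decision trees are adaptive, so the hiding probabilities must be argued against the tree by deferred decisions (revealing where $T_j$ and $T_k$ sit only as probes arrive) rather than by fixing the instance in advance; done naively this can slip into a circular argument.
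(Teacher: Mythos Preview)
Your proof is correct and takes a genuinely different route from the paper's.

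The paper argues in two stages. First, it considers the $N\times N$ matrix $M$ (with $N=\Theta(n^3)$ the number of labelled $3$-paths) whose $(i,j)$ entry records the algorithm's output at $v_2^1$ when $p_1,p_2$ carry labelings $T_i,T_j$. A counting argument over rows and columns---using that $M$ has exactly half ones and half zeros by correctness---extracts a single labeling of $p_1$ for which the output is $0$ on $\Omega(N)$ labelings of $p_2$ and $1$ on $\Omega(N)$ of them. Second, it observes that each probe answered ``isolated'' rules out only $O(n^2)$ candidate labelings of $p_2$, so fewer than $\Omega(n)$ probes cannot shrink both the ``zero'' set and the ``one'' set to empty, and the algorithm cannot decide.

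Your argument replaces this density reasoning by an odd-cycle parity trick on three instances: random disjoint paths $T_1,T_2,T_3$ with the hiding events $E_{12},E_{21},E_{31}$, leading to the unsatisfiable system $c_1+c_2=c_1+c_3=c_2+c_3=1$. The probabilistic step is the same ``each probe reveals little'' idea, but packaged as $\Pr[E_{ij}]=O(t/n)$ rather than as an elimination count. Your route is shorter and avoids the somewhat delicate row/column averaging; the paper's route, on the other hand, yields the slightly stronger intermediate statement that some fixed first path is ``balanced'' against $\Omega(n^3)$ second paths, which could be useful if one wanted sharper constants. One small cleanup: your deferred-decisions bound on $\Pr[E_{ij}]$ is more transparently stated via the deterministic probe set $P_i$ of $\T_{\beta_i}$ on the graph with only $T_i$ present---then $\neg E_{ij}$ is exactly $P_i\cap(T_j\cup T_k)=\emptyset$, and a direct union bound over the six random vertices gives $\Pr[E_{ij}]\le 6t/(n-3)$ without any conditioning subtleties.
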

\begin{proof}
Let $\mathcal{A}$ be a deterministic algorithm for the Two-path leader-election problem.
Let the triplet $\ell^i = (\ell_1^i, \ell_2^i, \ell_3^i)$ denote the labeling of $p_i$. Namely, $\ell_j^i$ is the identifier of $v_j^i$.
Let $m$ denote the size of the domain of $\ell^i$, namely the number possible labelings of $p_i$, i.e., $m= n (n-1) (n-2) = \Theta(n^3)$.
Observe that the answers to probes on $G$ depend only on the labels of $p_1$ and $p_2$ and the same is true for the output of $\mathcal{A}$ (every vertex that is not in $p_1$ or $p_2$ is an isolated vertex).

The proof has two parts. In the first part we prove that there is a labeling of $p_1$ and a constant $c$ such that $\mathcal{A}$ outputs $1$ on $v_2^1$ for at least $cN$ of the possible labelings of $p_2$ and $0$ on $v_2^1$ for at least $cN$ of the possible labelings of $p_2$.
In the second part we assume in contradiction that the algorithm requires less than $\Omega(n)$ probes and show that there exists a legal labeling of $p_1$ and $p_2$ such that $\mathcal{A}$ errs.

We now give the first part of the proof.
Let $T_1, \ldots, T_N$ denote all possible labelings of $p_1$.
Consider the $N$ by $N$ matrix $M$ that encodes the decision of the deterministic algorithm, that is, $M(i,j)$ equals to the algorithm's output for the query $v_2^1$ when $p_1$ is labeled by $T_i$ and $p_2$ is labeled by $T_j$.
Note that some pairs of labelings are illegal (their intersection should be empty). Let $N' = \Theta(n^3) = \gamma n^3$ denote the number of legal labels in each row (or column) of $M$.
Let $S$ be the set of rows in which the number of $1$'s is in $[\frac 13 N', \frac 23 N']$. If $S \neq \emptyset$ then we are done.
Otherwise, let $L_0$ ($L_1$) be the set of rows in which the number of $0$'s ($1$'s) is greater than the number of $1$'s ($0$'s). Before we continue with the proof, we prove the following property of $L_0$ and $L_1$.

\begin{lemma}
If $S = \emptyset$ then $L_i \geq N/4$ for every $i\in \{0, 1\}$.
\end{lemma}
\begin{proof}
From the fact that the algorithm always outputs $1$ on exactly one of the elements in $(v_2^1, v_2^2)$ we derive that in $M$ the number of $1$'s equals to the number of $0$'s.
Now, assume for the sake of contradiction that $L_i < N/4$ for some $i \in \{0,1\}$.
Since every row in $L_i$ has at most $N'$ $i$'s we obtain that the total number of $i$'s in $M$ is at most $|L_i| \cdot N' + (N- |L_i|) \cdot (N'/3) < (N/2) N'$, in contradiction to the fact that the number of $i$'s is exactly $(N/2) N'$.
\end{proof}

Let $i\in \{0,1\}$. Consider $M$ restricted to $L_i$.
Since in each row in $L_i$ the number of $i$'s is at least $\frac 23 N'$ we obtain that the total number of $i$'s in this restriction is at least $\frac 23 N' |L_i|$.\
Let $X$ denote average of the number of $i$'s in each column restricted to $L_i$.
Thus we obtain that $X \geq \frac 23 N' |L_i|\frac{1}{N}$.
Let $c \triangleq \frac{2|L_i|}{N'}(\frac23 \frac{N'}{N} -\frac 12)$, and let $a$ be the fraction of columns for which the number of $i$'s in each column restricted to $L_i$ is greater than $cN'$.
Then $X \leq |L_i|\cdot a + cN' (1-a)$.
Assume towards contradiction that $a \leq 1/2$.
Hence, $X \leq |L_i|/2 + cN'/2 < 2/3 N' |L_i|\frac{1}{N}$, a contradiction.
Hence, there exists a constant $c$ such that for strictly more than $1/2$ of the columns it holds that at least $cN'$ elements are $1$ (0).
We derive that there exists a columns such that both the number of $1$'s and the number of $0$'s is at least $cN'$ as desired.

Consider the query $v^1_2$. Observe that in each probe for which the answer is ``isolated vertex'', the algorithm eliminates at most $3\cdot n^2$ possible labeling of $p_2$.
Thus, after at most $c\gamma n^3/ (3n^2) = \Omega(n)$ probes, at least one ``zero'' labeling and one ``one'' labeling remains, which concludes the proof of the lemma.
\end{proof}

\section{\boldmath Localizing Stateless \lca\ Algorithms} \label{sec:constlca}
In this section we give a constructive, polynomial blow-up, randomised \lca\ simulation.

\mainlocal*

\subsection{Novelty}
We observe that in the \lca\ model an explicit simulation is possible, since, unlike the \local\ and \partree\ models, a \lca\ algorithm uses the same random seed for all queries.
This random seed is a costly resource and we try to minimize its length.
Known randomised implementations of greedy algorithms in the \lca\ domain require explicit random ordering constructions~\cite{alon12space-efficient,reingold14new} over the vertices or edges~\cite{alon12space-efficient,mansour12converting,mansour13local,reingold14new}.
In our implementation of the simulation, on the other hand, we use a permutation over the labels, which is a stronger requirement than a random ordering.
This requirement comes from the fact that in the simulation each vertex has a unique identifier and that the set of identifiers is assumed to be known.
In what follows we propose an implementation which builds on techniques by~\cite{v009a015,knr09}.

For many graph problems the simulation presented in this section proves that one can design a \lca\ algorithm that only performs ``close'' probes. On one hand, this brings closer the two models of  \lca\ and \local. On the other hand, the single source of randomness that a \lca\ algorithm possesses is an advantage for the \lca\ algorithm, so they are still far apart.

\subsection{Notations and Definitions}
Let $S_n$ denote the set of all permutations on $[n]$.
\begin{definition}[Statistical Distance]
Let $D_1$, $D_2$ be distributions over a finite set $\Omega$. The statistical distance between $D_1$ and $D_2$ is
$$
\|D_1 - D_2 \|= \frac 12 \sum_{\omega\in \Omega} |D_1(\omega) - D_2(\omega)| .
$$
Alternatively,
$$
\|D_1 - D_2 \|= \max_{A\subseteq \Omega} \left|\sum_{\omega\in A} D_1(\omega) - \sum_{\omega \in A} D_2(\omega)\right| .
$$
\end{definition}
We say that $D_1$ and $D_2$ are $\epsilon$-close if $\|D_1 - D_2\| \leq \epsilon$.

\begin{definition}
  Let $n,k \in \mathbb{N}$, and let $\mathcal{F} \subseteq S_n$ be a multiset of permutations. Let $\eps \geq 0$. The multiset $\mathcal{F}$ is $k$-wise $\eps$-dependent if for every $k$-tuple of distinct elements $(x_1,\ldots,x_k) \in [n]^k$, the distribution $(f(x_1),
  f(x_2),\ldots,f(x_k))$, when $f \simur \mathcal{F}$ is $\eps$-close to the uniform distribution over all $k$-tuples of distinct elements of $[n]$.
\end{definition}
As a special case, a multiset of permutations is $k$-wise independent if it is $k$-wise $0$-dependent.

\subsection{Techniques}
We shall use the following results from previous work in the proof of Theorem~\ref{thm:mainlocal}.
The following theorem is an immediate corollary of a theorem due to Alon and Lovett~\cite[Theorem~1.1]{v009a015}.
\begin{theorem}\label{thm:alon}
Let $\mu$ be a distribution taking values in $S_n$ which is $k$-wise $\epsilon$-dependent. Then there exists a distribution $\mu'$ over permutations which is $k$-wise independent, and such that the
statistical distance between $\mu$ and $\mu'$
is at most $O(\epsilon n^{4k})$.
\end{theorem}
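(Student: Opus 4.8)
The plan is to obtain Theorem~\ref{thm:alon} as a direct instantiation of Alon and Lovett's general ``almost-uniform implies close-to-uniform'' result for group actions (\cite[Theorem~1.1]{v009a015}). First I would recast $k$-wise (in)dependence in the language of group actions. Let $S_n$ act in the natural way on the set $X$ of ordered $k$-tuples of distinct elements of $[n]$, so that $|X| = (n)_k := n(n-1)\cdots(n-k+1) \le n^k$. For a distribution $\nu$ on $S_n$ and $\vec x,\vec y \in X$, write $p_\nu(\vec x,\vec y) = \Pr_{f \sim \nu}[f(\vec x) = \vec y]$. Since a uniformly random permutation maps a fixed ordered $k$-tuple of distinct elements to a uniformly random such tuple, the uniform distribution $U$ on $S_n$ satisfies $p_U(\vec x,\vec y) = 1/|X|$ for all $\vec x,\vec y$; and a distribution $\nu$ is $k$-wise independent exactly when $p_\nu(\vec x,\vec y) = 1/|X|$ for all $\vec x,\vec y$, i.e.\ exactly when it is ``uniform'' for the action $S_n \curvearrowright X$ in Alon--Lovett's sense.

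Next I would translate the hypothesis. If $\mu$ is $k$-wise $\epsilon$-dependent, then for every fixed $\vec x$ the distribution $\vec y \mapsto p_\mu(\vec x,\vec y)$ is within statistical distance $\epsilon$ of uniform on $X$, and hence $\bigl| p_\mu(\vec x,\vec y) - 1/|X| \bigr| \le 2\epsilon$ for every pair $(\vec x,\vec y)$; this is precisely, up to the harmless factor $2$, the pointwise almost-uniformity hypothesis needed to invoke Alon--Lovett's theorem for the action $S_n \curvearrowright X$. That theorem then produces a distribution $\mu'$ on $S_n$ with $p_{\mu'}(\vec x,\vec y) = 1/|X|$ for all $\vec x,\vec y$ --- i.e.\ $\mu'$ is exactly $k$-wise independent --- and with $\|\mu - \mu'\| = O\bigl(\epsilon\,|X|^{4}\bigr) = O\bigl(\epsilon\, n^{4k}\bigr)$, since their bound scales as a fixed power (the fourth) of the size of the acted-upon set and $|X| \le n^k$. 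This gives the statement.

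The only genuine work --- and it all sits inside the cited theorem, so I would not reprove it --- is the polytope-geometry step underlying Alon--Lovett: showing that a distribution on $S_n$ whose image in the space of $k$-tuple marginals is $\epsilon$-close to the $k$-wise-independent point must itself be only $\mathrm{poly}(|X|)\cdot\epsilon$-far, in $\ell_1$ on $S_n$, from the polytope of genuinely $k$-wise-independent distributions; quantitatively, this is where the $n^{4k}$ factor is born, from controlling the ``complexity'' of that polytope (it is cut out of the simplex of distributions on $S_n$ by at most $|X|^2$ linear equalities, together with a condition-number / vertex-denominator estimate). On our side, the main thing to be careful about is the bookkeeping in the second step: that our per-tuple statistical-distance notion of $\epsilon$-dependence does imply the pointwise notion used by Alon--Lovett, and that the constant factors and the slack $|X| \le n^{k}$ are absorbed into the $O(\cdot)$. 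All of this is routine.
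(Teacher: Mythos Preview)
Your proposal is correct and aligned with the paper's approach: the paper does not give any argument at all but simply states that the theorem is an immediate corollary of Alon and Lovett~\cite[Theorem~1.1]{v009a015}, which is already phrased for permutations and directly yields the $O(\epsilon n^{4k})$ bound. Your group-action recasting and the translation between the statistical-distance and pointwise notions of $\epsilon$-dependence are accurate, but they amount to unpacking Alon--Lovett's own setup rather than adding anything the paper requires; the paper is content with a bare citation.
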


Additionally, we build on the following construction in our \lca\ simulation \simu. This construction enables accesses to a  uniform random permutation from a $k$-wise $\epsilon$-dependent family of permutations by using a seed of length $O(k\cdot \log n  + \log(1/\epsilon))$.
\begin{theorem}[{\cite[Theorem~5.9]{knr09}}]\label{thm:kaplan}
There exists $\mathcal{F} \subseteq S_n$, such that $\mathcal{F}$ is $k$-wise $\epsilon$-dependent. $\mathcal{F}$ has description length $O(k\cdot \log n  + \log(1/\epsilon))$, and time complexity $\poly(\log n, k, \log (1/\epsilon))$.
\end{theorem}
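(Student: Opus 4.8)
The plan is to reprove Theorem~\ref{thm:kaplan} by building $\mathcal{F}$ out of the one primitive that already meets the target parameters for \emph{functions}, a $k$-wise independent hash family, and then lifting functions to permutations through a shallow, locally computable network (exact $k$-wise independent permutations being, for $k>3$, too expensive, so ``almost'' is essential). For the atom, fix a prime power $q=\poly(n)$ with $q\ge n$, identify $[n]$ with a subset of $\mathbb{F}_q$, and take the degree-$(k-1)$ polynomials $x\mapsto\sum_{i=0}^{k-1}a_i x^i$ over $\mathbb{F}_q$: this is $k$-wise independent as a family of functions, a member is described by $O(k\log n)$ bits and evaluated in $\poly(\log n,k)$ time, and on any fixed $k$ inputs such a function is injective except with probability $O(k^2/q)$, i.e.\ it already acts like a uniform random injection on small sets.

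\textbf{From functions to a locally computable permutation.} I would realize an element of $S_n$ (reducing, routinely, to a domain of convenient shape such as a power of two, absorbing the general $n$ into the reduction) as an $r$-layer routing network --- a Feistel network, a Beneš/butterfly network, or a composition of ``card-shuffle'' permutations --- in which the action of layer $j$ is controlled by a hash function $h_j$ from the atom, arranged so that computing $\sigma_j$ and $\sigma_j^{-1}$ on a single point costs only $O(1)$ evaluations of $h_j$; then $\sigma_r\circ\cdots\circ\sigma_1$ and its inverse cost $r\cdot\poly(\log n,k)$. Taking $r=O(\log n+\log(1/\epsilon))$ independent layers gives a naive seed of $O(rk\log n)$; to reach $O(k\log n+\log(1/\epsilon))$ one reuses randomness across layers, deriving all $r$ round keys from a single $O(k\log n)$-bit $k$-wise independent seed (feeding the layer index into the hash) and spending only $O(\log(1/\epsilon))$ additional bits on encoding $r$ and on a small-bias correction.

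\textbf{Bounding the $k$-wise dependence.} Fix distinct $x_1,\dots,x_k$ and follow their images through the layers, comparing with the ideal process that applies a genuinely uniform fresh injection to the current tuple at each layer. A path-coupling argument over the $k$ trajectories --- the standard analysis for Feistel/Beneš/shuffle networks --- shows that, conditioned on the trajectories never colliding inside a layer (an event of probability $O(k^2/q)$ per layer, by the atom), each layer shrinks the total-variation gap between the real and ideal distributions of the tuple geometrically; after $r=\Theta(\log n+\log(1/\epsilon))$ layers this gap is below $\epsilon/2$, and the $O(rk^2/q)\le\epsilon/2$ collision slack --- together with the small-bias term introduced by the randomness recycling, which a Theorem~\ref{thm:alon}-style rounding to an exactly $k$-wise independent distribution can also absorb --- leaves total dependence $\le\epsilon$. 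Then $\mathcal{F}$, the image of the $O(k\log n+\log(1/\epsilon))$-bit seed space, is $k$-wise $\epsilon$-dependent with the claimed description length and with $\poly(\log n,k,\log(1/\epsilon))$ evaluation time (and, as a bonus that Theorem~\ref{thm:mainlocal} exploits, invertible in the same time).

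\textbf{Main obstacle.} The delicate part is the second step together with the third: one must exhibit a network whose single layer \emph{and its inverse} are computable from bounded-independence hashing and then prove that only $O(\log n+\log(1/\epsilon))$ such layers mix every $k$-tuple; the coupling analysis, and the bookkeeping that keeps the $\log(1/\epsilon)$ term in the seed free of the factor $k$ via randomness recycling and small bias, are where the real work lies, the remaining steps being routine.
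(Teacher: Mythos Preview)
The paper does not prove this theorem at all: it is quoted verbatim from \cite[Theorem~5.9]{knr09} as a black-box tool, with no accompanying argument. So there is no ``paper's own proof'' to compare against; your proposal is an attempted reconstruction of the Kaplan--Naor--Reingold result itself.

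As a reconstruction, your sketch is in the right family of ideas---Feistel-type networks with limited-independence round functions---but the step where you compress the seed from $O(rk\log n)$ down to $O(k\log n+\log(1/\epsilon))$ has a genuine gap. Your suggestion to ``derive all $r$ round keys from a single $O(k\log n)$-bit $k$-wise independent seed by feeding the layer index into the hash'' does not work: once the round functions share a seed they are correlated, and the per-layer coupling argument (which conditions on the outcome of previous layers and then uses the \emph{fresh} randomness of the current layer) collapses. A vague appeal to ``small-bias correction'' does not repair this. The actual mechanism in \cite{knr09} for making the $\log(1/\epsilon)$ term additive is different: one first builds, with $O(k\log n)$ bits, a permutation family that is $k$-wise $\delta$-dependent for some fixed constant $\delta<1$, and then amplifies $\delta$ down to $\epsilon$ by composing $O(\log(1/\epsilon))$ such permutations whose seeds are chosen along a walk on a constant-degree expander over the seed space---so the walk costs only $O(1)$ bits per step rather than a fresh $O(k\log n)$. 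That expander-walk (derandomized-product) step is the missing idea; without it your seed length is $O\bigl((k\log n)\cdot\log(1/\epsilon)\bigr)$, not the claimed bound.
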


\subsection{\boldmath Simulating Queries in the \lca\ Model}
Recall that the simulation in Section~\ref{sec:simupar} simulates a \partree\ algorithm via a distributed \local\ algorithm.
In order to prove Theorem~\ref{thm:mainlocal} we consider a similar simulation with the difference that now both the simulation and the simulated algorithm are in \lca.
However, the simulation has the property that it is limited to query \oracleG\ only on $Q$ (as defined in Section~\ref{sec:simupar}).
Since a \lca\ is equipped with a random seed we can use a randomised simulation and consequently the blow-up in the probe-complexity will be significantly smaller.
In particular, in this section we consider $N = O(n^4)$, that is, the size of the augmented graph $G \cup H$ is polynomial in the size of the input graph $G$.
We are interested in keeping the seed length small and so we show next that the additional random seed that is required for the simulation is small.

\subsection{Proof of Theorem~\ref{thm:mainlocal}}
Let $A$ be as in Theorem~\ref{thm:mainlocal} and let \simuA\ denote its simulation on $G\cup H$ as described  above.
Since the size of $G\cup H$ is $N$ the size of the random seed required by $A$ is $s(N)$.
For a fixed a fixed random seed $r\in \{0,1\}^{s(N)}$,
a fixed query $q\in V(G)$ and a fixed permutation $\pi\in S_N$ let $\simu(G, r, \pi, q)$ be the indicator variable for the event that the simulation succeeds in simulating $A$ with random seed $r$ on input $q$ where $\pi$ is the permutation which is used to relabel the vertices in $G\cup H$.

Let $\mathcal{F}$ be a family of $k$-wise independent permutations over $[N]$ where  $k \triangleq 1 + (\Delta+1) \cdot t(N)$.

\begin{lemma}\label{lemma:unitokwise}
For every $q \in V(G)$ and $r\in \{0,1\}^{s(N)}$,
$$\Pr_{\pi\simur \mathcal{F}}(\simu(G, r, \pi, q) = 0) \leq \frac{kn}{N-k} $$
\end{lemma}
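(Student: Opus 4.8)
The plan is to argue that, with $G$, the seed $r$ and the query $q$ all fixed, the success or failure of the simulation depends on the permutation $\pi$ only through a bounded, adaptively chosen collection of evaluations of $\pi$ and $\pi^{-1}$, so that a $k$-wise independent $\pi$ behaves exactly like a uniformly random one for this purpose; the quantitative bound then comes for free from the analysis in Section~\ref{sec:simupar}.

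First I would make precise the following structural observation. Running $\simu(G,r,\pi,q)$ amounts to an adaptive procedure that inspects at most $k=1+(\Delta+1)\cdot t(N)$ coordinates of $\pi$: it reads $\pi(q)$ once to initialise, and then, for each of the at most $t(N)$ probes issued by $A$, it reads $\pi^{-1}$ once at the probed identifier (to decide whether the probe is local, lands in $V(H)$, or causes a failure) and reads $\pi$ at the at most $\Delta$ neighbours returned to $A$ (to relabel them). Some of these reads are deterministic given the transcript so far --- a local probe into $Q$, or a probe to a vertex of $H$ that has already been relabelled --- but in every case the identity of the next coordinate to inspect is a deterministic function of the values read so far. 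In particular the procedure never inspects more than $k$ coordinates, which is exactly why $\mathcal F$ was chosen to be $k$-wise independent.

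Next I would invoke $k$-wise independence through a deferred-decisions argument. From the definition, if $\mathcal F$ is $k$-wise independent then for any $m\le k$ distinct inputs $x_1,\dots,x_m$ and any $m$ distinct outputs $y_1,\dots,y_m$ we have $\Pr_{\pi\simur\mathcal F}[\pi(x_i)=y_i\ \forall i]=(N-m)!/N!$, which is precisely the probability under a uniformly random permutation of $[N]$; equivalently, conditioned on the values of $\pi$ at $m-1<k$ points, the next value is uniform over the $N-(m-1)$ admissible labels, just as in the uniform case. Since an inverse read $\pi^{-1}(z)=w$ is the same event as the forward read $\pi(w)=z$, a straightforward induction over the $\le k$ steps of the procedure shows that the distribution of the entire transcript --- hence of the indicator $\simu(G,r,\pi,q)$ --- is identical whether $\pi\simur\mathcal F$ or $\pi$ is drawn uniformly from $S_N$.

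Finally, under a uniformly random $\pi$ the failure probability is at most $k\cdot\frac{n}{N-k}=\frac{kn}{N-k}$ by exactly the computation that produced~\eqref{eq:simfails} in Section~\ref{sec:simupar}: that argument used only that, at each of the at most $k$ global-probe steps, the undiscovered vertex hit by the probe is uniform over $[N]\smallsetminus Q$, which is the uniform instance of the previous paragraph. Combining the two facts gives $\Pr_{\pi\simur\mathcal F}[\simu(G,r,\pi,q)=0]\le \frac{kn}{N-k}$, as claimed. The only genuinely delicate point, and the one to spell out carefully when writing, is the deferred-decisions step: one must verify that the procedure really does touch at most $k$ coordinates of $\pi$, and that interleaving forward reads of $\pi$ with inverse reads $\pi^{-1}$ does not escape the reach of $k$-wise independence --- which it does not, since both kinds of read are merely constraints forcing $\pi$ to agree with a growing partial injection on at most $k$ points.
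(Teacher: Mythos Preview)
Your proposal is correct and follows essentially the same approach as the paper: both argue that the simulation, with $G,r,q$ fixed, depends on $\pi$ only through at most $k$ adaptively chosen evaluations of $\pi$ and $\pi^{-1}$, convert inverse reads $\pi^{-1}(z)=w$ into forward constraints $\pi(w)=z$, and then use $k$-wise independence to conclude that the failure probability equals that under a uniformly random permutation, whence~\eqref{eq:simfails} gives the bound. The only cosmetic difference is that the paper phrases the indistinguishability step via equivalence classes of permutations agreeing on a ``positive'' (all-forward) sequence of length at most $k$, whereas you phrase it as a deferred-decisions induction over the $\le k$ inspections; these are two presentations of the same argument.
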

\begin{proof}
Fix $r\in \{0,1\}^{s(N)}$, $\pi\in S_N$ and $q\in V(G)$.
We will prove that
$$
\Pr_{\pi\simur \mathcal{F}}(\simu(G, r, \pi, q) = 0) = \Pr_{\pi\simur S_N}(\simu(G, r, \pi, q) = 0),
$$
and then the lemma will follow from Equation~\ref{eq:simfails}.
Recall that the simulation relabels the vertices by accessing both to $\pi$ and $\pi^{-1}$.
Now consider the sequence of inspections the simulation makes to $\pi$ and $\pi^{-1}$ by the order they occurred.\footnote{The construction stated in Theorem~\ref{thm:kaplan} computes $\pi(v)$ for some $v$ in time $\poly(\log n, k, \log (1/\epsilon))$. This construction does not describe a time efficient way to access $\pi^{-1}$. As time complexity is not the focus of this paper, we implement the inverse access in a straightforward manner.}
The first inspection to $\pi$ is $\pi(q)$.
Then, according to the decisions of $A$ and the answers from \oracleG, the simulation continues to inspect both $\pi$ and $\pi^{-1}$ on at most $k-1$ locations.

Let $(v, u, b) \in N \times N \times \{-1, 1\}$ represent a single inspection and answer as follows:
If $b = 1$ then the interpretation is that the simulation inspects $\pi$ at index $v$ and the answer is $u$ and if $b = -1$ then the interpretation is that the simulation inspects $\pi^{-1}$ at $v$ and the answer is $u$.

For a fixed $G$, $r$, $\pi$ and $q$ the sequence of (possibly adaptive) inspections and answers is fixed. Let it be denoted by $\sigma = (v_1, u_1, b_1), \ldots,(v_{\ell}, u_{\ell}, b_{\ell})$ where $\ell \leq k$.
We say that a permutation $\pi$ agrees with the sequence $\sigma$ iff $\pi^{b_i}(v_i) = u_i$ for every $i\in [\ell]$.
Clearly, we can replace $\pi$ with any permutation $\pi'$ which agrees with $\sigma$ and the outcome (failure or success) remains unchanged, that is $\simu(G, r, \pi, q) = \simu(G, r, \pi', q)$.
We say that $\sigma$ is {\em positive} if $b_i = 1$ for every $i\in [\ell]$.
We say that a pair of sequences $\sigma$ and $\sigma'$ are {\em equal} if any permutation that agrees with $\sigma$ also agrees with $\sigma'$ and vice versa.
Observe that from any sequence $\sigma$ we can obtain an equal sequence, $\sigma'$, which is positive by performing the following replacements: If there exists $j$ such that $b_j = -1$ (this means that $\pi^{-1}(v_j) = u_j$ which is equivalent to $v_j = \pi(u_j)$) then replace $(v_j, u_j, -1)$ with $(u_j, v_j, 1)$.
Therefore we obtain that $S_N$ can be partitioned into equivalence classes where in each class $C$:
(1)~all the permutations agree with some positive sequence $Q$ of length at most $k$, and (2)~$\simu(G, r, \pi, q)$ is fixed when $\pi$ is taken from $C$.
Since for every positive sequence $\sigma$ we have
$$
\Pr_{\pi\simur \mathcal{F}}[\pi\text{ agrees with } \sigma] = \Pr_{\pi\simur S_N}[\pi\text{ agrees with } \sigma]\ ,$$
we obtain the desired result.
\end{proof}

\begin{corollary}\label{coro:n3}
For a fixed graph $G$ and a fixed random seed $r \in \{0,1\}^{s(N)}$,
the probability that the simulation works when $\pi \simur \mathcal{F}$ is at least $1- O(1/n)$.
\end{corollary}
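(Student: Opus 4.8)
The plan is to combine Lemma~\ref{lemma:unitokwise} with a union bound over all queries $q \in V(G)$. Fix the graph $G$ and the random seed $r \in \{0,1\}^{s(N)}$. By Lemma~\ref{lemma:unitokwise}, for each individual query $q \in V(G)$ we have
\begin{equation*}
\Pr_{\pi \simur \mathcal{F}}\bigl(\simu(G, r, \pi, q) = 0\bigr) \leq \frac{kn}{N-k}.
\end{equation*}
Say that the permutation $\pi$ ``works'' (for the pair $(G, r)$) if $\simu(G, r, \pi, q) = 1$ for \emph{every} query $q \in V(G)$ simultaneously; this is the event whose probability the corollary concerns. Applying the union bound over the $n$ choices of $q$,
\begin{equation*}
\Pr_{\pi \simur \mathcal{F}}\bigl(\exists\, q \in V(G) : \simu(G, r, \pi, q) = 0\bigr) \leq n \cdot \frac{kn}{N-k} = \frac{k n^2}{N-k}.
\end{equation*}

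It remains to check that this bound is $O(1/n)$ under the parameter regime of Theorem~\ref{thm:mainlocal}. Here $N = O(n^4)$, and $k = 1 + (\Delta+1)\cdot t(N)$ with $t(n) = O(n^{1/4}/\Delta)$, so $t(N) = O(N^{1/4}/\Delta) = O(n/\Delta)$ and hence $k = O(n)$. Thus the numerator $k n^2 = O(n^3)$, while the denominator $N - k = \Theta(n^4)$ (since $k = O(n) \ll N$), giving $k n^2 / (N-k) = O(n^3/n^4) = O(1/n)$, as claimed. Therefore $\pi$ works with probability at least $1 - O(1/n)$.

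There is essentially no obstacle here: the corollary is a routine union bound on top of Lemma~\ref{lemma:unitokwise}, and the only thing to be careful about is bookkeeping the parameter constants — specifically confirming that $k = O(n)$ and $N - k = \Theta(n^4)$ so that the $\Theta(n^4)$ in the denominator dominates the $O(n^3)$ in the numerator. One minor point worth stating explicitly is that Lemma~\ref{lemma:unitokwise} already holds for $\pi$ drawn from the $k$-wise independent family $\mathcal{F}$ (not merely the uniform $S_N$), so the union bound can be taken directly over $\pi \simur \mathcal{F}$ without any further loss; this is exactly why the lemma was phrased in terms of $\mathcal{F}$ in the first place.
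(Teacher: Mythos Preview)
Your proof is correct and follows essentially the same approach as the paper: apply Lemma~\ref{lemma:unitokwise} and take a union bound over the $n$ queries to get failure probability at most $kn^2/(N-k)$, then verify this is $O(1/n)$ using $t(N)=O(n/\Delta)$ and $N=n^4$. One trivial notational slip: you write $N=O(n^4)$ but then use $N-k=\Theta(n^4)$, which needs the matching lower bound; in the paper $N=n^4$ exactly, so this is fine.
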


\begin{proof}
Recall that the simulation succeeds if and only if $\simu(G, r, \pi, q) = 1$ for every $q \in V(G)$.
Since $t(N) = O(n/\Delta)$, from Lemma~\ref{lemma:unitokwise} and the union bound over the $n$ possible queries we obtain that the simulation succeeds with probability at least $1 - \frac{kn^2}{N-k} = 1 - O(1/n)$.
\end{proof}

Let $\mathcal{F}'$ be a $k$-wise $\epsilon$-dependent family of permutations.
\begin{corollary}\label{coro:simerr}
For a fixed graph $G$ and a fixed random seed $r \in \{0,1\}^{s(N)}$,
the probability that the simulation works when $\pi \simur \mathcal{F}'$ is at least $1- O(1/n) - O(\epsilon N^{4k})$.
\end{corollary}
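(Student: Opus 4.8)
The plan is to combine Corollary~\ref{coro:n3} with the approximation guarantee of Theorem~\ref{thm:alon} via a triangle inequality on statistical distance. First I would invoke Theorem~\ref{thm:alon}: since $\mathcal{F}'$ is $k$-wise $\epsilon$-dependent, there is a $k$-wise independent distribution $\mu'$ over permutations in $S_N$ whose statistical distance from the distribution $\mu$ induced by $\mathcal{F}'$ is at most $O(\epsilon N^{4k})$. Note that a $k$-wise independent distribution over $S_N$ is exactly the kind of distribution to which Corollary~\ref{coro:n3} applies, since Lemma~\ref{lemma:unitokwise} only used the defining property that the probability $\pi$ agrees with any positive sequence of length at most $k$ matches the uniform-over-$S_N$ value, and $k$-wise independence is precisely this condition. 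Hence, drawing $\pi$ from $\mu'$, the simulation succeeds with probability at least $1 - O(1/n)$ for every fixed $G$ and $r$.

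Next I would transfer this to $\mathcal{F}'$. Fix $G$ and $r$, and let $\mathcal{S} \subseteq S_N$ be the (fixed) event that $\simu(G,r,\pi,q)=1$ for all $q\in V(G)$; recall from the proof of Lemma~\ref{lemma:unitokwise} that whether $\pi \in \mathcal{S}$ depends only on which equivalence class $\pi$ lies in, i.e.\ $\mathcal{S}$ is a well-defined subset of $S_N$. By the second (max over subsets) characterisation of statistical distance in the Definition of statistical distance,
\[
\bigl|\Pr_{\pi\simur \mathcal{F}'}[\pi\in\mathcal{S}] - \Pr_{\pi\simur \mu'}[\pi\in\mathcal{S}]\bigr| \leq \|\mu - \mu'\| \leq O(\epsilon N^{4k}).
\]
Combining with the bound $\Pr_{\pi\simur\mu'}[\pi\in\mathcal{S}] \geq 1 - O(1/n)$ from the previous paragraph gives $\Pr_{\pi\simur\mathcal{F}'}[\pi\in\mathcal{S}] \geq 1 - O(1/n) - O(\epsilon N^{4k})$, which is the claimed bound since $N = O(n^4)$ and so $N^{4k} = \Theta(n^{16k})$ up to constants absorbed into the $O(\cdot)$ (the statement is phrased in terms of $N$, so no rewriting is even needed).

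The only genuinely delicate point is making sure that the event ``the simulation succeeds on all queries'' is a legitimate fixed subset of $S_N$ on which statistical distance can be applied — but this is already established in the proof of Lemma~\ref{lemma:unitokwise}, where $S_N$ is partitioned into equivalence classes on each of which $\simu(G,r,\pi,q)$ is constant, so intersecting over the $n$ queries still yields a union of equivalence classes. The rest is a two-line triangle-inequality argument. One should also double-check that $k = 1 + (\Delta+1)\cdot t(N)$ together with $t(N) = O(n^{1/4}/\Delta)$ keeps $k$ small enough that the constructions of Theorems~\ref{thm:alon} and~\ref{thm:kaplan} apply with the stated parameters, but for the statement of this corollary itself that bookkeeping is not needed — it only enters when assembling the final parameters in Theorem~\ref{thm:mainlocal}.
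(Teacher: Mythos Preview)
Your proposal is correct and follows essentially the same approach as the paper: define the success event $\mathcal{S}\subseteq S_N$, bound $\Pr_{\mu'}[\pi\in\mathcal{S}]$ via Corollary~\ref{coro:n3}, and transfer to $\mathcal{F}'$ using the statistical-distance bound from Theorem~\ref{thm:alon}. If anything, you are slightly more careful than the paper in noting that Theorem~\ref{thm:alon} produces \emph{some} $k$-wise independent distribution $\mu'$ (not necessarily the particular $\mathcal{F}$ fixed earlier), and that Corollary~\ref{coro:n3} applies to $\mu'$ because Lemma~\ref{lemma:unitokwise} uses only the $k$-wise independence property.
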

\begin{proof}
For a fixed graph $G$ and random seed $r$ let $S\subseteq S_N$ denote the subset of permutations $\pi$ for which the simulation works.
By Corollary~\ref{coro:n3}, $\Pr(\pi \in S) \geq 1- O(1/n)$ when $\pi \simur \mathcal{F}$.
By Theorem~\ref{thm:alon} the statistical distance between the uniform distribution over $\mathcal{F}$ and the uniform distribution over $\mathcal{F}'$ is at most $O(\epsilon N^{4k})$.
Therefore $\Pr(\pi \in S) \geq 1- O(1/N)-O(\epsilon N^{4k})$ when $\pi \simur \mathcal{F}'$ as desired.
\end{proof}

We are now ready to prove the main result of this section.
\begin{proof}[Proof of Theorem~\ref{thm:mainlocal}]
In the simulation we shall use the construction from Theorem~\ref{thm:kaplan} to obtain a random access to a permutation $\pi$ over $[N]$ where $N = n^4$, using a random seed of length $O(k \cdot \log N + \log \gamma)$ where $\gamma = O(n N^{4k})$ and access time $\poly(\log N, k, \log \gamma)$.
By Corollary~\ref{coro:simerr} the simulation works with probability at least $1 - O(1/n)$ as desired.
\end{proof}

\section{Approximability with Parallel Decision Trees} \label{sec:opt-lbs}

In this section we prove almost tight bounds for the approximability of maximum independent sets and maximum cuts in the decision tree model. The first result also implies a lower bound on the number of colours used by any sublogarithmic-time algorithm.

The result for independent set is essentially due to Alon~\cite{alon10constant}, who uses the fact that there are two graph distributions with different properties that are indistinguishable to a constant-time algorithm. We strengthen this technique by showing that it still holds even if the algorithm is given information on the global structure of the graph. All results hold also for the \local{} model with suitable adjustments to the constant factors. There is also a simple extension to randomised algorithms for all results.

\begin{theorem} \label{thm:adh-mis-apx}
    Maximum independent set cannot be approximated within a factor of $\frac{4 \ln d}{d}+\varepsilon$, for any $\varepsilon > 0$, in $d$-regular graphs with a sublogarithmic number of probes, even if the instance is known to be isomorphic to one of two possible instances.
\end{theorem}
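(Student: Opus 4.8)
The plan is to exhibit two $d$-regular graphs $G_0$ and $G_1$ on the same vertex set $[n]$ that are (essentially) indistinguishable to any algorithm making $o(\log n)$ probes, yet whose maximum independent sets differ in size by the claimed multiplicative factor. For $G_1$ I would take a bipartite $d$-regular graph, which has an independent set of size $n/2$. For $G_0$ I would take a $d$-regular graph whose independence number is as small as possible — by the classical probabilistic/Ramsey-type bound this is about $\frac{2\ln d}{d}\,n\,(1+o(1))$ for a random $d$-regular graph, which after accounting for the factor between $n/2$ and $\frac{2\ln d}{d} n$ gives the ratio $\frac{4\ln d}{d}$. The subtlety flagged by the theorem statement (``even if the instance is known to be isomorphic to one of two possible instances'') is that we cannot rely on the algorithm's ignorance of the global structure; we only get to hide \emph{which relabelling} produced the instance. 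So the two instances must be made locally indistinguishable by a randomised embedding.

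Here are the steps in order. First, fix $G_0$ (low-independence $d$-regular graph) and $G_1$ (bipartite $d$-regular graph), both on vertex set $[n]$, with $\alpha(G_0) \le (\tfrac{2\ln d}{d}+o(1))n$ and $\alpha(G_1) \ge n/2$. Second, argue local indistinguishability: a $d$-regular graph of girth $\gg t$ looks, in every radius-$t$ ball, like the infinite $d$-regular tree; so if both $G_0$ and $G_1$ are chosen with girth $\omega(1)$ (random $d$-regular graphs have girth $\omega(1)$ with positive probability, and bipartite $d$-regular graphs of large girth also exist), then after a uniformly random relabelling of $[n]$ the distribution of the answers to any fixed sequence of $t = o(\log n)$ adaptive probes is within $o(1)$ statistical distance of the tree's — hence within $o(1)$ of each other. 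This is where I would reuse the ``reshuffling fools the probes'' idea from Section~\ref{sec:simulation}: each new probe lands, with probability $1-o(1)$, on a vertex not yet seen, and given the girth bound its neighbourhood extends the explored forest without creating a cycle. Third, suppose for contradiction a parallel decision tree (one tree $\T_v$ per vertex, depth $t=o(\log n)$) outputs an independent set that is a $(\tfrac{4\ln d}{d}+\eps)$-approximation on every instance. On $G_1$ it must output a set of size $\ge (\tfrac{4\ln d}{d}+\eps)\cdot\tfrac n2 = (\tfrac{2\ln d}{d}+\tfrac\eps2)n$; on $G_0$ any independent set has size $\le (\tfrac{2\ln d}{d}+o(1))n$. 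Fourth, derive the contradiction: let $p_v$ be the probability (over the random relabelling) that $\T_v$ outputs ``$1$''; by indistinguishability $|p_v^{G_0}-p_v^{G_1}| = o(1)$ for each $v$, and summing over the $n$ vertices and using that the expected output-set size equals $\sum_v p_v$ (plus controlling the rare ``collision'' events where the forest picture fails, whose contribution is $o(n)$), the expected sizes on $G_0$ and $G_1$ differ by $o(n)$ — contradicting the $\Omega(n)$ gap between $(\tfrac{2\ln d}{d}+\tfrac\eps2)n$ and $(\tfrac{2\ln d}{d}+o(1))n$. Since the bound holds in expectation over relabellings, some relabelling of one of the two instances is a counterexample, which also handles the randomised-algorithm extension by averaging over the algorithm's coins as well.

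The main obstacle I expect is step two done carefully enough to survive the adaptivity of the probes and the fact that an independent-set \emph{output} for a vertex $v$ may legitimately depend on $v$'s own (relabelled) identifier, not only on the isomorphism type of its neighbourhood — so ``looks like the tree'' must be phrased at the level of the joint distribution of (explored subgraph, identifiers seen), and the union bound controlling the failure of the forest picture must be pushed through all $t$ probes of all $n$ trees simultaneously, which forces $t = o(\log n)$ rather than merely $t = o(\sqrt{\log n})$. A secondary technical point is pinning down the existence of a $d$-regular graph simultaneously achieving independence number $(\tfrac{2\ln d}{d}+o(1))n$ \emph{and} girth $\omega(1)$; this is standard (delete short cycles from a random regular graph, or cite the known constructions), but it needs to be stated. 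Everything else — the bipartite side, the arithmetic giving $\tfrac{4\ln d}{d}$, the corollary about chromatic number of bipartite graphs (a $c$-colouring yields an independent set of size $n/c$, so $c \ge d/(4\ln d) - o(1)$) — is routine.
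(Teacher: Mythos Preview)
Your high-level strategy matches the paper's: exhibit a bipartite instance and a low-independence instance that are indistinguishable to $o(\log n)$ probes, then compare expected output sizes. The execution differs in one place worth noting. Rather than take two unrelated high-girth $d$-regular graphs and a uniformly random relabelling (which yields only $o(1)$-closeness of the probe views and forces you to track collision error terms when summing over all $n$ trees), the paper derives both instances from a \emph{single} high-girth base graph $G$: one instance $A_G$ is two disjoint copies of $G$, the other $B_G$ is the bipartite double cover of $G$, and the randomisation merely swaps the labels within each pair $(v_1,v_2)$ independently with probability $1/2$. Because every edge of $G$ lifts to a pair of edges that differ only in which copy each endpoint lies in, and the per-pair swap hides exactly that bit, the probe-view distributions on $p(A_G)$ and $p(B_G)$ are \emph{exactly} equal as long as no cycle is closed. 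This removes all statistical-distance bookkeeping and any need to reuse the reshuffling machinery of Section~\ref{sec:simulation}; the expected output sizes on the two instances are then literally identical.

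One genuine slip in your write-up: ``girth $\omega(1)$'' is not enough. With $t=o(\log n)$ probes the explored subgraph can have diameter up to $2t$, so you need girth $\Omega(\log_{d-1} n)$ to guarantee it stays a forest; a base graph of girth $\log\log n$, say, would be distinguished from the tree by a depth-$\sqrt{\log n}$ BFS. Relatedly, ``delete short cycles from a random regular graph'' does not by itself produce a $d$-regular graph with girth $\Omega(\log n)$ \emph{and} independence number $(2\ln d/d+o(1))n$; the paper simply invokes this as a known fact (Theorem~\ref{thm:lb-graphs}\ref{thm:lb-graphs-is}). With the corrected girth hypothesis your averaging argument does go through, but the coupled double-cover construction is both shorter and sidesteps the issue entirely.
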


This is almost tight in the sense that the simple randomised greedy algorithm finds an independent set of expected size $\Omega(\ln d / d)$ in triangle-free graphs of average degree $d$~\cite{shearer83note}. This centralised algorithm can be simulated in constant time in a distributed fashion for example using the random ordering technique of Nguyen and Onak so that an expected $(1-\varepsilon)$-fraction of the nodes succeeds in the simulation~\cite{nguyen08constant-time}.

The proof of Theorem~\ref{thm:adh-mis-apx} will immediately yield the following corollary.
\begin{corollary} \label{cor:col-lb}
    Bipartite graphs with nodes of degree $d$, including trees, cannot be coloured with $o(d/\ln d)$ colours with a sublogarithmic number of probes.
\end{corollary}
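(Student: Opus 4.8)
The plan is a reduction from colouring to independent set that feeds directly into Theorem~\ref{thm:adh-mis-apx}. The elementary observation is that in any proper $c$-colouring of an $n$-vertex graph, the largest colour class is an independent set of size at least $n/c$. So suppose, towards a contradiction, that some parallel decision tree algorithm $\mathcal{C}$ with a sublogarithmic number of probes properly colours the graphs under consideration with $c = o(d/\ln d)$ colours. Let $G_1$ be the bipartite hard instance fixed below, let $a$ be the index of a largest colour class that $\mathcal{C}$ produces on $G_1$ (so at least $n/c$ vertices of $G_1$ receive colour $a$), and let $\mathcal{A}$ be the algorithm that outputs, at each node $v$, the indicator of the event ``$\mathcal{C}$ assigns colour $a$ to $v$''. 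Then $\mathcal{A}$ has the same sublogarithmic probe complexity as $\mathcal{C}$, it outputs an independent set on every graph on which $\mathcal{C}$ produces a proper colouring, and $a$ is a constant depending only on $n$ and $d$; hence $\mathcal{A}$ is a legitimate parallel decision tree algorithm, and on $G_1$ it outputs an independent set of size at least $n/c$.

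I would then invoke the construction inside the proof of Theorem~\ref{thm:adh-mis-apx}: it supplies a bipartite $d$-regular instance $G_1$ with $\alpha(G_1) = n/2$, which may moreover be taken of high girth so that it is locally a tree (this is what makes the statement apply to trees), together with a twin $d$-regular instance $G_2$ that is locally isomorphic to $G_1$, has $\alpha(G_2) = (2+o(1))\tfrac{\ln d}{d}\, n$, and is indistinguishable from $G_1$ to every sublogarithmic-probe algorithm. Running $\mathcal{A}$ on $G_2$: since $\mathcal{A}$ returns, on all but a lower-order fraction of the vertices, the same answer on $G_2$ as on $G_1$, its output on $G_2$ has size at least $n/c - o(n)$ and is independent except along $o(n)$ edges (those incident to the few disagreeing vertices); pruning one endpoint of each such edge leaves a genuine independent set of $G_2$ of size $n/c - o(n)$. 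Since every independent set of $G_2$ has at most $\alpha(G_2) = (2+o(1))\tfrac{\ln d}{d}\, n$ vertices -- and the indistinguishability error is of smaller order than $\tfrac{\ln d}{d}\, n$ -- we get $n/c = O\!\bigl(\tfrac{\ln d}{d}\, n\bigr)$, i.e.\ $c = \Omega(d/\ln d)$, contradicting $c = o(d/\ln d)$. Equivalently: on $G_1$ the algorithm $\mathcal{A}$ approximates maximum independent set within a factor of $\tfrac{n/c}{\alpha(G_1)} = \tfrac{2}{c} = \omega(\ln d / d)$, which for $d$ large exceeds $\tfrac{4\ln d}{d}$ and so contradicts Theorem~\ref{thm:adh-mis-apx}. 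The \local{} model version follows the same way, with the adjustments to constants noted before the theorem.

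The one step that is not entirely routine -- and it is already done inside the proof of Theorem~\ref{thm:adh-mis-apx} -- is justifying that a colouring algorithm intended for bipartite graphs behaves on the non-bipartite twin $G_2$ exactly as it does on $G_1$ (so that $\mathcal{A}$ really certifies a large independent set of $G_2$): this is precisely the local indistinguishability of $G_1$ and $G_2$ under a sublogarithmic number of probes established in that proof, which is also what licenses treating $G_1$ as ``a tree'' in the statement. Granting that, the corollary is only the numerical comparison of $1/c$ with the $\Theta(\ln d / d)$ barrier of the theorem, so it is immediate.
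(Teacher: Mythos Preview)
Your reduction to Theorem~\ref{thm:adh-mis-apx} has a genuine gap. For that theorem to bite, the algorithm $\mathcal{A}$ you construct must output an independent set on \emph{both} instances of the hard pair---the bipartite $G_1=B_G$ and the non-bipartite $G_2=A_G$---because the upper bound $2\alpha(G)$ in the proof of Theorem~\ref{thm:adh-mis-apx} comes from the fact that the output on $A_G$ is an independent set of $A_G$. Your $\mathcal{A}$ is the indicator of a colour class of $\mathcal{C}$, and $\mathcal{C}$ is only assumed to properly colour bipartite graphs; on $G_2$ that colour class need not be independent at all. Your patch, that ``$\mathcal{A}$ returns the same answer on $G_2$ as on $G_1$ at all but $o(n)$ vertices'', misreads the indistinguishability in the proof of Theorem~\ref{thm:adh-mis-apx}: that argument is \emph{distributional} over a random swap of identifiers, not a pointwise near-identity of outputs on two fixed labelled graphs. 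Concretely, probing $v_1$ in $B_G$ returns the neighbours $\{u_2:u\sim v\}$, while in $A_G$ it returns $\{u_1:u\sim v\}$---different identifiers---so a deterministic decision tree can answer differently everywhere. And even if the outputs did coincide, the edge sets of $G_1$ and $G_2$ differ: one side of the bipartition of $B_G$ is an entire copy of $G$ inside $A_G$, carrying $\Theta(nd)$ edges, so ``independent except along $o(n)$ edges'' is not what you would get, and pruning cannot repair it. The ``Equivalently'' formulation has the same defect: Theorem~\ref{thm:adh-mis-apx} only constrains algorithms that output an independent set on both instances.

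The paper's proof avoids the IS detour entirely. It works directly with the high-girth graph $G$ of Theorem~\ref{thm:lb-graphs}\ref{thm:lb-graphs-is}, which has $\chi(G)\ge n/\alpha(G)=\Omega(d/\ln d)$. Running the colouring algorithm on $G$ forces some node $u$ to receive a colour at least $\chi(G)$. Because the girth exceeds twice the number of probes, the probe view $\F$ at $u$ is a forest; pad it with an arbitrary tree $S$ on the remaining identifiers to obtain a tree $T$ with $|T|=|G|$. The decision tree sees the identical probe view at $u$ in $T$, so it outputs the same high colour there---exhibiting a tree on which the algorithm uses $\Omega(d/\ln d)$ colours. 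Note that this argument, like yours, really needs the algorithm to be well-defined (and proper) on the non-bipartite $G$; the corollary should be read as ``a sublogarithmic-probe colouring algorithm uses $\Omega(d/\ln d)$ colours even when the input happens to be a tree'', which the probe-view-embedding trick delivers cleanly.
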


Finally, we will show the following, almost optimal bound for maximum cut.
\begin{theorem} \label{thm:max-cut-apx}
    Maximum cut cannot be approximated within a factor of $\frac{1}{2}+\frac{\sqrt{d-1}}{d}$ in $d$-regular graphs, for infinitely many values of $d$, with a sublogarithmic number of probes.
\end{theorem}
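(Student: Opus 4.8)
The plan is to adapt the indistinguishability technique used for Theorem~\ref{thm:adh-mis-apx} (Alon's approach) to the max-cut problem, exploiting the well-known gap between bipartite $d$-regular graphs and random $d$-regular graphs. On one side, take a graph that is bipartite and $d$-regular: its max cut has value exactly $1$ (every edge crossed), i.e.\ $\frac{1}{2} + \frac12$ of the edges. On the other side, take a random $d$-regular graph (for the values of $d$ for which a suitable construction is available, e.g.\ $d-1$ a prime power, matching the ``infinitely many $d$'' in the statement). A classical second-moment/eigenvalue argument shows that the max cut of a random $d$-regular graph is, with high probability, at most $\bigl(\tfrac12 + \tfrac{\sqrt{d-1}}{d} + o(1)\bigr)$-fraction of its edges; this is precisely the Ramanujan-type bound, since for a $d$-regular Ramanujan graph the largest nontrivial eigenvalue of the adjacency matrix is at most $2\sqrt{d-1}$, and the max cut is bounded by $\tfrac{m}{2}\bigl(1 + \tfrac{\lambda}{d}\bigr)$ via the standard spectral bound on the cut. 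So the max-cut \emph{value} differs by a constant factor between the two families.

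The second and main ingredient is local indistinguishability with a twist, matching the phrasing ``even if the algorithm is given information on the global structure of the graph'': I would show that a sublogarithmic-probe parallel decision tree (or \local\ algorithm) cannot distinguish the bipartite instance from the random $d$-regular instance, because the radius-$t$ neighbourhood of almost every vertex is a $d$-regular tree in \emph{both} cases when $t = o(\log n)$ (a random $d$-regular graph has girth growing, on average, so only an $o(1)$ fraction of vertices see a short cycle within radius $t$). Hence any labelling $f$ produced by the algorithm behaves, up to an $o(1)$ fraction of vertices/edges, identically on both instances: the expected fraction of edges cut is the same on both families up to $o(1)$. If the algorithm guaranteed approximation ratio better than $\tfrac12 + \tfrac{\sqrt{d-1}}{d}$ on the random instance, the same $f$ would cut more than that fraction on the bipartite instance as well (up to $o(1)$), which is consistent there; the contradiction instead comes from running the argument the other way, or — following the MIS proof's structure — by fixing two specific instances (one bipartite, one with small spectral gap complement) that are isomorphic as abstract graphs except for the identity of a few vertices, so the decision tree's output distribution is genuinely forced to be nearly the same, yet the optimum differs. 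I would mirror exactly whatever two-instance packaging Theorem~\ref{thm:adh-mis-apx} uses, since the corollary remark says the proof ``will immediately yield'' that setup; the only change is replacing the independence/colouring gap with the bipartite-vs-Ramanujan cut gap.

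Concretely the steps, in order: (1)~fix $d$ with $d-1$ a prime power and take a Ramanujan $d$-regular graph $R$ on $n$ vertices (Lubotzky–Phillips–Sarnak / Morgenstern), establishing $\mathrm{maxcut}(R) \le \bigl(\tfrac12 + \tfrac{\sqrt{d-1}}{d}\bigr)m$ via the spectral cut bound; (2)~take a bipartite $d$-regular graph $B$ on the same number of vertices with $\mathrm{maxcut}(B) = m$; (3)~argue that for $t = o(\log n)$ probes, the distributions of radius-$t$ views seen by the algorithm on $B$ and on $R$ (after a random relabelling, as in the proof of Theorem~\ref{thm:main}, to neutralise identifier information) have statistical distance $o(1)$, because a $1-o(1)$ fraction of vertices have tree neighbourhoods in each; (4)~conclude that the expected cut fraction the algorithm achieves is the same on $B$ and on $R$ up to $o(1)$, so it cannot exceed $\tfrac12 + \tfrac{\sqrt{d-1}}{d} + o(1)$ on $B$ where the optimum is $1$, giving the claimed inapproximability. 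The randomised extension is handled as in the other proofs by noting the statistical-distance bound already tolerates internal randomness. The main obstacle is step~(3): making the indistinguishability argument fully rigorous when the algorithm is allowed to ``know the global structure'' — I would handle this exactly as in Theorem~\ref{thm:adh-mis-apx}, by presenting the two instances as a pair among which the algorithm must decide, so that the relevant randomness is only over which of the two it faces (and over identifier relabelling), keeping the analysis to a clean union bound over the few non-tree vertices rather than a delicate coupling of two random graph models.
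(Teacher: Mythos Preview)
Your high-level plan is right --- pit a Ramanujan graph (small max cut by the spectral bound) against a bipartite graph (max cut $=1$) and argue indistinguishability --- but you are missing the specific construction that makes the indistinguishability step go through, and your step~(3) as written has a real gap.

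The paper does not take an arbitrary bipartite $B$ and an unrelated Ramanujan $R$. It follows the MIS proof verbatim: start from a single high-girth Ramanujan graph $G$ (part~(ii) of Theorem~\ref{thm:lb-graphs}), and form $A_G$ (two disjoint copies of $G$) and $B_G$ (the \emph{bipartite double cover} of $G$). Both live on the same $2n$-vertex set $\{v_1,v_2 : v\in V(G)\}$; the only difference is that an edge $\{u,v\}$ of $G$ becomes $\{u_1,v_1\},\{u_2,v_2\}$ in $A_G$ versus $\{u_1,v_2\},\{u_2,v_1\}$ in $B_G$. Now $B_G$ is bipartite (so admits a perfect cut), while $A_G$ inherits $G$'s small max cut. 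The random perturbation is not a full relabelling but the pairwise swap ``exchange the identifiers of $v_1$ and $v_2$ independently with probability $1/2$'', and under this swap the probe--answer distributions on $A_G$ and $B_G$ are \emph{exactly identical} as long as the revealed subgraph contains no cycle --- which the logarithmic girth of $G$ guarantees for any sublogarithmic-probe tree. This explicit coupling is what handles arbitrary non-local probes with zero loss.

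Your step~(3) does not achieve this. You speak of ``radius-$t$ views'', but decision trees are not confined to a ball; you must control the joint distribution of an adaptive sequence of probe answers, not just local neighbourhoods. Showing that two \emph{unrelated} high-girth $d$-regular graphs are indistinguishable under a uniform relabelling is plausible but is a separate and more delicate statement than anything your sketch supplies; ``a $1-o(1)$ fraction of vertices have tree neighbourhoods'' is neither the right hypothesis (with logarithmic girth, \emph{all} of them do) nor sufficient in the decision-tree model. Your aside that the two instances are ``isomorphic as abstract graphs except for the identity of a few vertices'' is also off: $A_G$ and $B_G$ are not isomorphic at all (one is bipartite, the other is not); the indistinguishability comes from the double-cover coupling, not from near-isomorphism. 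Once you plug in the double cover, the rest of your outline (spectral cut bound, averaging over the perturbation, fixing a worst-case identifier assignment) is correct and matches the paper.
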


The lower bound is almost tight in the sense that there are simple randomised distributed constant-time algorithms that cut a fraction of $1/2 + \Omega(1/\sqrt{d})$ edges in triangle-free graphs~\cite{hirvonen14local-maxcut,shearer92bipartite}.

All three proofs rely on the fact that there exist $d$-regular graphs of a logarithmic girth with no good solutions. For independent set (and chromatic number) these can be shown to exist using arguments about random graphs; for maximum cut, we rely on the explicit constructions of Lubotzky et al.~\cite{lubotzky88ramanujan} and Morgenstern~\cite{morgenstern94existence}. Formally, we use the following theorem.
\begin{theorem}[\cite{bollobas81independence,alon10constant,lubotzky88ramanujan,morgenstern94existence,mohar90eigenvalue}] \label{thm:lb-graphs} The following infinite families of graphs exist:
\begin{enumerate}[label=(\roman*)]
    \item For each $d \ge 3$, there is a family of graphs $\G_d = (G_i)_{i \ge 1}$ such that each $G_i$ has a girth of $g(G_i) \ge 0.5 \log_{d-1} |G_i|$ and the size of the largest independent set is $\alpha(G_i) \le \frac{2 \ln d}{d}|G_i| - O(\sqrt{|G_i|}) $.\label{thm:lb-graphs-is}
    \item For each prime power $d-1 \ge 2$, there is a family of graphs $\H_d = (H_i)_{i \ge 1}$ such that each $H_i$ has a girth of $g(H_i) = \Omega(\log_{d-1} |H_i|)$ and the fraction of cut edges in the largest cut is $\mc(H_i) \le \frac{1}{2}+\frac{\sqrt{d-1}}{d}$.\label{thm:lb-graphs-mc}
\end{enumerate}
\end{theorem}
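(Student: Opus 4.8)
This statement is an assembly of known facts, so the plan is to quote the right constructions and check they meet the stated bounds rather than to prove anything from scratch. For part~\ref{thm:lb-graphs-is} I would use the classical probabilistic construction (essentially the one behind~\cite{bollobas81independence,alon10constant}): take a random $d$-regular graph, which already has small independence number, and then delete a handful of vertices to destroy every short cycle. For part~\ref{thm:lb-graphs-mc} I would instead use a fully explicit family --- the Lubotzky--Phillips--Sarnak Ramanujan graphs~\cite{lubotzky88ramanujan}, and Morgenstern's~\cite{morgenstern94existence} for the remaining admissible degrees --- and combine their spectral gap with the standard eigenvalue upper bound on the maximum cut.

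For part~\ref{thm:lb-graphs-is}: let $G$ be a uniformly random $d$-regular graph on $n$ vertices (or $G(n,d/n)$; it makes no difference). The first-moment / Bollob\'as bound~\cite{bollobas81independence} gives a constant $c_d<\tfrac{2\ln d}{d}$ such that $\alpha(G)\le c_d\,n$ asymptotically almost surely. Separately, by linearity of expectation the expected number of cycles of length below $g_0:=0.5\log_{d-1}n$ is $\sum_{k<g_0}\Theta\!\bigl((d-1)^k/k\bigr)=O(\sqrt n/\log n)$, so with positive probability $G$ simultaneously obeys the independence bound and has at most $O(\sqrt n/\log n)$ such short cycles. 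Delete one vertex from each short cycle and call the result $G_i$; then $|G_i|=n-o(\sqrt n)$, the graph $G_i$ has no cycles shorter than $g_0$ so $g(G_i)\ge g_0\ge 0.5\log_{d-1}|G_i|$, and $\alpha(G_i)\le\alpha(G)\le c_d\,n$, which --- since $n-|G_i|=o(\sqrt n)$ and $c_d<\tfrac{2\ln d}{d}$ --- is at most $\tfrac{2\ln d}{d}|G_i|-\Omega(|G_i|)$, comfortably below $\tfrac{2\ln d}{d}|G_i|-O(\sqrt{|G_i|})$. Letting $n$ range over the infinitely many orders for which a $d$-regular graph exists yields the family $\G_d$.

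For part~\ref{thm:lb-graphs-mc}: when $d-1$ is a prime power take $H_i$ to be an LPS graph $X^{p,q}$ (or a Morgenstern graph); such a graph is $d$-regular, Ramanujan --- so $|\lambda_{\min}(A(H_i))|\le 2\sqrt{d-1}$ --- and has girth $\Omega(\log_{d-1}|H_i|)$, all three being part of the cited constructions. It then remains to recall the standard spectral bound on the maximum cut (see e.g.~\cite{mohar90eigenvalue}, and~\cite{alon10constant} for exactly this application): for a $d$-regular graph on $m$ vertices with adjacency matrix $A$ and any $x\in\{\pm1\}^m$, the number of edges cut by the bipartition encoded by $x$ equals $\tfrac14\bigl(dm-x^{\top}Ax\bigr)\le\tfrac14\bigl(dm+|\lambda_{\min}(A)|\,m\bigr)$; dividing by $|E(H_i)|=dm/2$ gives $\mc(H_i)\le\tfrac12+\tfrac{|\lambda_{\min}(A)|}{2d}\le\tfrac12+\tfrac{\sqrt{d-1}}{d}$. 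Since $d-1$ runs over all prime powers this covers infinitely many $d$, as required.

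I do not expect a genuine obstacle here; the content is bookkeeping. The one point that needs care is in part~\ref{thm:lb-graphs-is}: one must check that destroying the short cycles costs only $o(\sqrt n)$ vertices --- which is exactly why the girth threshold is placed at $0.5\log_{d-1}n$ rather than, say, $\log_{d-1}n$ --- and that the resulting $o(\sqrt n)$ change in the vertex count is absorbed by the $\Theta(1)$-fraction gap between $c_d$ and $\tfrac{2\ln d}{d}$; both are immediate once the short-cycle count is in hand. In part~\ref{thm:lb-graphs-mc} the only thing to verify beyond quoting the references is that the implied constant in the girth bound $\Omega(\log_{d-1}|H_i|)$ is positive for these explicit graphs, which it is (the LPS graphs have girth at least $\tfrac{2}{3}\log_{d-1}|H_i|-O(1)$).
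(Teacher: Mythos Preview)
Your proposal is correct and is precisely the content of the references the paper cites; the paper's own proof is nothing more than pointers to \cite{bollobas81independence,alon10constant} for part~\ref{thm:lb-graphs-is} and to the Ramanujan constructions \cite{lubotzky88ramanujan,morgenstern94existence} together with the eigenvalue bound \cite{mohar90eigenvalue} for part~\ref{thm:lb-graphs-mc}, so your write-up is simply an unpacking of those citations rather than a different argument.
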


\begin{proof}
    For a proof of part~\ref{thm:lb-graphs-is}, see \cite{bollobas81independence}~and~\cite[Theorem 1.2]{alon10constant}. For part~\ref{thm:lb-graphs-mc}, Ramanujan graph constructions~\cite{lubotzky88ramanujan,morgenstern94existence} have no large cuts~\cite{mohar90eigenvalue}.
\end{proof}

We will use the same strategy in the proofs of Theorems~\ref{thm:adh-mis-apx}~and~\ref{thm:max-cut-apx}: we show that there exist two graphs, one with a small optimal solution and another with a large optimal solution, such that a $t$-probe decision tree cannot separate the two.

\begin{proof}[Proof of Theorem~\ref{thm:adh-mis-apx}]
    Let $\G_d$ be a family of graphs as in part~\ref{thm:lb-graphs-is} of Theorem~\ref{thm:lb-graphs}. For an arbitrary $G \in \G_d$ with $|G| = n$, we have the following construction. Let $A_G$ be the graph consisting of two disjoint copies of $G$, that is, for each node $v \in G$ there are two copies $v_1, v_2 \in V(A_G)$ and for each edge $\{u,v\} \in E(G)$ two copies $\{u_1, v_1\}, \{u_2, v_2\} \in E(A_G)$. Similarly, let $B_G$ be the \emph{bipartite double cover} of $G$, that is, a graph on $2n$ nodes such that for each node $v \in V(G)$ there are two copies $v_1, v_2 \in V(B_G)$ and for each edge $\{u,v\} \in E(G)$ there are two edges $\{u_1, v_2 \}, \{u_2, v_1 \} \in E(H)$. We identify $V = V(A_G) = V(B_G)$ in the natural way. Note that clearly $\alpha(B_G) = n$.

    Let $\id \colon V \to [2n]$ be an identifier assignment on $A_G$ and $B_G$. We let the algorithm know $G$, $A_G$, $B_G$ and $\id$. The input is then promised to be either $A_G$ or $B_G$ with the following random perturbation $p$: for each pair of nodes $v_1$ and $v_2$ independently, we swap the identifiers of the nodes with probability $1/2$. Denote the graph distributions produced this way by $p(A_G)$ and $p(B_G)$.

    Denote by $Q(v,t)$ the sequence of $t(n)$ probes done by $\T_v$, where $t(n) = o(\log n)$. This sequence can be seen as a subset of edges revealed to the algorithm. Now it is easy to see for an arbitrary node $v$ that, since the graph induced by nodes in $Q(v,t)$ does not contain any cycles (that is, $t < g(G)/2$), we have $\Pr[Q(v,t) | p(A_G)] = \Pr[Q(v,t) | p(B_G)]$; consider a rooted version of each subtree in $Q(v,t)$ and note that both $p(A_G)$ and $p(B_G)$ have the same probability to produce that subtree.

    Finally, given that the distribution of the outputs of $\T_v$ is the same in $p(A_G)$ and $p(B_G)$ for each node $v$, and that any $t(n) = o(\log n)$ time algorithm $A$ must have $\sum_{v \in V(A_G)} \E[\A(v)] = \sum_{v \in V(B_G)} \E[\A(v)] \le 2\alpha(G)$, there must exist an identifier assignment on $B_G$ such that the size of the independent set produced by $\A$ is at most the expectation, that is $2\alpha(G)$.
\end{proof}

\begin{proof}[Proof of Corollary~\ref{cor:col-lb}]
    Clearly $\chi(G) \ge n/(\alpha(G)n)$. Therefore there must be an identifier assignment such that $A_G(u) \ge \chi(G)$ for some node $u$ in any graph $G$. Now let $T = \F \cup S$ be the disjoint union of the probe view of $A$ at $u$ when run in $G$ and some arbitrary tree $S$ such that $|T| = |G|$, with some arbitrary identifier assignment on $S$. Clearly the probe view of $A$ at $u$ is the same in $G$ and in $T$, and therefore $A_T(u) = A_G(u) \ge \chi(G)$.
\end{proof}

\begin{proof}[Proof of Theorem~\ref{thm:max-cut-apx}]
    The proof proceeds almost exactly as in the proof of Theorem~\ref{thm:adh-mis-apx}, except that we use the family of graphs from part~\ref{thm:lb-graphs-mc} of Theorem~\ref{thm:lb-graphs}. The key property is again that a bipartite double cover of a graph admits a cut in which all edges are cut edges.
\end{proof}

\bibliographystyle{plainnat}
\bibliography{centralised-local}

\end{document}